\renewcommand{\cite}{\citep}
\newcommand{\mC}{\mathcal{C}}
\newcommand{\uz}{\underline{0}}
\newcommand{\ud}{d}
\newcommand{\utheta}{\theta}
\newcommand{\etas}{\eta \,}
\newcommand{\mL}{\mathcal{L}}
\newcommand{\mN}{\mathcal{N}}
\newcommand{\mG}{\mathcal{G}}
\newcommand{\mK}{\mathcal{K}}
\newcommand{\mF}{\mathcal{F}}
\theoremstyle{plain}
\newtheorem{theorem}{Theorem}[section]
\newtheorem{lemma}[theorem]{Lemma}
\theoremstyle{definition}
\theoremstyle{remark}
\begin{document}

\title{A General Mechanism Design Methodology for Social Utility Maximization with Linear Constraints}
\author{Abhinav Sinha and Achilleas Anastasopoulos
\\ {\normalsize EECS Department, University of Michigan, Ann Arbor, MI 48109}
\\[-0.5ex] \normalsize \texttt{\{absi,anastas\}@umich.edu}} 
\date{\normalsize \today}

\maketitle

\begin{abstract}
Social utility maximization refers to the process of allocating resources in such a way that the sum of agents' utilities is maximized under the system constraints.
Such allocation arises in several problems in the general area of communications, including unicast (and multicast multi-rate) service on the Internet, as well as in applications with (local) public goods, such as power allocation in wireless networks, spectrum allocation, etc.
Mechanisms that implement such allocations in Nash equilibrium have also been studied but either they do not possess full implementation property, or are given in a case-by-case fashion, thus obscuring fundamental understanding of these problems.

In this paper we propose a unified methodology for creating mechanisms that fully implement, in Nash equilibria, social utility maximizing functions arising in various contexts where the constraints are convex.
The construction of the mechanism is done in a systematic way by considering the dual optimization problem.
In addition to the required properties of efficiency and individual rationality that such mechanisms ought to satisfy,
three additional design goals are the focus of this paper: a) the size of the message space scaling linearly with the number of agents (even if agents' types are entire valuation functions), b) allocation being feasible on and off equilibrium, and c) strong budget balance at equilibrium and also off equilibrium whenever demand is feasible.\let\thefootnote\relax\footnote{\small \textit{Terms:} Mechanism Design, Information Elicitation, Foundations of Incentive Compatibility, Auction Theory.}\let\thefootnote\relax\footnote{\small \textit{Key Words and Phrases:} Multicast routing, proportional allocation, game theory, mechanism design, full implementation, Nash equilibrium.}
\end{abstract}





\section{Introduction}


In the general area of communications, a number of decentralized resource allocation problems have been studied in the context of mechanism design. Such problems include unicast service on the Internet \cite{hajek}, \cite{basar}, \cite{demoscorrection},  multi-rate multicast service on the Internet \cite{demosmulticorrection}, \cite{SiAn_multicast_arxiv}, power allocation in wireless networks \cite{demosshruti}, \cite{han2012game}, \cite{zhang2011game}, spectrum allocation \cite{randall} and pricing in a peer-to-peer network \cite{neely2009optimal}.
The mechanism design framework is both appropriate and useful in the above problems since these problems are motivated by
the designer's desire to allocate resources efficiently in the presence of strategic agents who possess private information about their level of satisfaction from the allocation.

Usually mechanism design solutions define a contract such that the induced game
has at least one equilibrium (Nash, Bayesian Nash, dominant strategy, etc) that corresponds to the desired allocation.
This is usually obtained with \emph{direct} mechanisms by appealing to the revelation principle \cite{tilmanbook} \cite{narahari1}.
The drawbacks of these direct approaches is that they require agents to quote their types
(which may be entire valuation functions) and that the induced game may have other extraneous equilibria that
are not efficient.
In this paper, the focus is on  \textbf{full Nash
implementation}. Without going into a formal definition, full Nash implementation refers
to the design of contracts such that only the designer's most preferred outcome is realized
as a result of interaction between strategic agents (i.e. at Nash equilibria (NE)), as opposed to
general mechanism design, where other less preferred outcomes are also possible.
Thus implementation is more stringent and is capable of producing better allocations. Readers may refer to \cite{jackson} for a survey on implementation theory.

Concentrating on those proposed solutions in the literature that guarantee full Nash implementation, one observes a fragmented and case-by-case approach. One may ask how fundamentally different are for example the problems of unicast service on the Internet,
multicast multi-rate service on the Internet, power allocation in wireless networks, etc,
to justify a separately designed mechanism for each case.
Alternatively, one may ask what are the common features in all these problems that can lead to a unified
mechanism design approach.
These questions provide the motivation for the work presented in this paper.

In particular, our starting point is to state a class of problems as social utility maximization
under linear  inequality/equality constraints. We then proceed by characterizing their solution through Karush-Kuhn-Tucker (KKT) conditions.
Analyzing the dual optimization problem is essential in our approach, because it hints at how
the taxation part of the mechanism should be designed.
Subsequently we define a general mechanism and show that it results in full Nash implementation
when the agents' valuation functions are their private information\footnote{Use of Nash equilibrium as a solution concept itself requires some justification, since it applies only to games with complete information. This issue is discussed further in Section~\ref{secoffeq}. However, in this work we accept the justification given by Reichelstein and Reiter in~\cite{reiter} and Groves and Ledyard in~\cite{groves} based on the second interpretation of Nash equilibrium as the fixed point of an (possibly myopic) adjustment process, \cite{nashthesis51}.}.
Since the application domain of interest is in the area of communications, we give special emphasis on the size of the message space (as a consequence, VCG-type mechanisms, see \cite[Section~5.3]{krishna},  \cite[Chapter~5]{tilmanbook}, \cite[Chapter~10]{shoham2009multiagent}, \cite{narahari1}, \cite{narahari2}, are inappropriate since they require quoting of types, which are entire valuation function in this set-up). A notable exception is the work~\cite[Section~5]{johari2009efficiency} and \cite{hajekvcg} that adapts VCG for a small message space (one dimensional message per user) and guarantees off-equilibrium feasibility. However in their work, full implementation can't be guaranteed and there is possibility of extraneous equilibria.
In this paper, the message space scales linearly with number of agents so that the proposed mechanism is scalable.

In addition to the stated goal of getting optimal allocations at all Nash equilibria, there
are other auxiliary properties that are sought in a mechanism. An important
such property is  \emph{individual rationality}; this requires that agents are weakly better
off at NE than not participating in the mechanism at all. The purpose of this is to ensure
that agents are willing to sign the contract in the first place.
Another important property is \emph{strong budget balance} (SBB) at NE. This requires that at NE, the total monetary transfer between agents (assuming quasi-linear utility functions) is zero.
The unified mechanism proposed in this work possesses both these properties.

Finally, a mechanism may be endowed with auxiliary properties off equilibrium. These are meant
to improve the applicability in practical settings.
For instance, the NE is interpreted as the convergent outcome when agents in the system ``learn'' the
game by playing it repeatedly, which implies that during this process the
messages (and thus allocations and taxes) are off equilibrium.
In our opinion, the most important auxiliary property is \textbf{feasibility of allocation on and off equilibrium}.
This property is essential whenever the system constraints are hard constraints on resources and cannot be violated by any means. For instance, the contract should not promise rate allocations to agents that ever violate the capacity constraints of the network links because such a contract would be practically invalid since it promises something that can never be achieved.
Note that, at equilibrium the allocation has to satisfy the constraints by definition, so feasibility always holds at NE.
Another auxiliary property is SBB off equilibrium. Similarly this property guarantees that in a practical situation where a dynamic learning process converges to NE, each step of this process leaves a zero balance.
One of the main features of the work in this paper is that the allocation scheme is designed to guarantee both
feasibility off equilibrium and SBB off equilibrium whenever demand is feasible.
Specifically, feasibility is achieved by utilizing \emph{proportional allocation}.
Proportional allocation refers to the idea of using agents’ demands to create their allocation by projecting their overall
demand vector back down to the boundary of the feasible region whenever it is outside of it.
This way everyone receives allocation that is proportional to their original
demand. This allocation method generalizes the idea of proportional
allocation that was introduced in mechanism design framework by \cite{basar}, \cite{hajek} for unicast network with one capacity constraint and for stochastic control of networks in~\cite{kelly}, as well as by the authors in~\cite{SiAn_unicast_arxiv}, \cite{SiAn_multicast_arxiv}.

The main contributions of this work are summarized as follows.
\begin{itemize}
\item A unified framework is proposed for full implementation (in NE) mechanism design in social utility maximization problems with convex constraints.

\item The proposed mechanism has small message  space (although agents' types are entire valuation functions) and is scalable.

\item The proposed mechanism is individually rational and strongly budget balanced at NE.

\item The proposed mechanism lends itself to practical implementation (through learning algorithms) because allocation is feasible even off equilibrium (and is strongly budget balanced off equilibrium when demand is feasible).

\end{itemize}

The rest of the paper is structured as follows.
In Section~\ref{secCP} we describe three examples relevant in communications and formulate
the general centralized allocation problem that we wish to implement.
In Section~\ref{secmech} we describe the proposed mechanism.
Section~\ref{secres} contains all the proofs of full implementation. Section~\ref{secoffeq} contains off-equilibrium results and discusses their relevance.
Finally in Section~\ref{secgen} we conclude with a discussion of important generalizations of this set-up.



\section{Motivation and Centralized Problem} \label{secCP}

In this section we start by describing various important resource allocation problems that arise in communications and to which our generalized methodology applies (in Section~\ref{secgen} we discuss generalizations that will help solve an even larger class of problems than the ones described below). After the examples, we define a general centralized optimization problem, \eqref{CPg}, that covers all the examples. Following that we state general assumptions on~\eqref{CPg}
(some additional technical assumptions will be made in Section~\ref{secres}).

\subsection{Interesting Resource allocation problems in Communications} 


\subsubsection*{Unicast Transmission on the Internet}

Consider agents on the Internet from set $ \mN = \{1,\ldots,N\} $, where each agent $ i \in \mN $ is a pair of source and destination that communicate via a pre-decided route consisting of links in $ \mL_i $. All the agents together communicate on the network consisting of links in $ \mL = \cup_{i \in \mN} \mL_i $. Since each link in the network has a limited capacity, this results in constraints on the information rate allocated to each agent. Considering a scenario where agents have (concave) utility functions/profiles $ \{v_i(\cdot)\}_{i \in \mN} $ that measure the satisfaction received by agents for various allocated rates, we can write the social utility optimization problem as
\begin{gather} \label{CPuni}
\max_{x \in \mathbb{R}_+^N}~ \sum_{i \in \mN} v_i(x_i) \tag{CP$ _\text{u} $}\\
\text{s.t.} \quad \sum_{i \in \mN^l} \alpha_i^l x_i \le c^l \quad \forall~ l \in \mL. 
\end{gather}
In the above $ \mN^l \triangleq \{ i \in \mN \mid l \in \mL_i \} $ is the set of agents present on link $ l $, $ c^l $ is the capacity of link $ l $ and $ \alpha_i^l $ are positive weights meant to differentiate between true information rate $ x_i $ and its imposition on capacity of the links of the network through coding rate, packet error etc.

The above feasible set is a polytope in the first quadrant of $ \mathbb{R}_+^N $ and is created by faces that have outward normal vectors pointing away from the origin. For the details of a full implementation mechanism specifically for the unicast problem readers may refer to \cite{rahuljain},  \cite{demoscorrection}, \cite{SiAn_unicast_arxiv}.





\subsubsection*{Public Goods}

In contrast to the private consumption problem above, there are public goods problems where the resources are shared directly between agents instead of sharing via constraints. The unicast problem was a scenario where allocation of rate to one agent on a link would imply less available rate for other agents on the link (because of the capacity constraint) - the rate allocation thus is private in this case. In contrast to this, if there is a resource allocation problem where two or more agents can simultaneously use the same resource then it will be classified as a public goods problem. A well-studied example of this kind is the wireless transmission with interference, in which the  power level of each agent affects any other agent through the  signal-to-interference-and-noise ratio (SINR).

Here we consider a simplified scalar version of the  public goods problem for a system of agents from the set $ \mN $ in the following form
\begin{gather}
\max_{x \in \mathbb{R}_+}~ \sum_{i \in \mN} v_i(x) \tag{CP$ _\text{pb} $}\\
\text{s.t.} \quad x \in \mathcal{X},
\end{gather}
where $ \mathcal{X} $ is a convex subset of $ \mathbb{R}_+ $. Note that the argument for all utility functions is the same; since there is one public good being simultaneously used by all the agents (for which $ x $ marks the usage level). We rewrite such a problem in the constrained form
\begin{subequations}
\begin{gather}
\max_{\underline{x} \in \mathbb{R}_+^N}~\sum_{i \in \mN} v_i(x_i) \tag{CP$ _\text{pb} $} \\
\text{s.t.} \quad x_1 = x_2 = x_3 = \ldots = x_N \\
\text{s.t.} \quad x_1 \in \mathcal{X}.
\end{gather}
\end{subequations}
The treatment when $ x $ is a vector is not very different and we discuss this in the generalizations at the end. Implementation for the public goods problem is the most studied of all the examples in this paper, see \cite{groves}, \cite{hurwicz1979outcome}, \cite{chen2002family}. The distinction from private goods is that generally public goods problems require handling the ``free-rider'' problem \cite[Section 11.C]{MWG}.  

\subsubsection*{Local Public Goods}

Another important resource allocation problem in communications is the local public goods problem. The basic idea of direct sharing of resources is same as above but in here the sharing is only among agents locally. So there are local groups of agents for whom the allocation has to be the same, but this common allocation can be different from one group to the next. If we divide the set of agents into disjoint local groups: $ \mN = \sqcup_{k \in \mK} \mN_k  $, then the centralized problem can be written as
\begin{gather} \label{CPlpg}
\max_{{x} \in \mathbb{R}_+^K}~ \sum_{k \in \mK} \sum_{i \in \mN_k} v_i(x_k) \tag{CP$ _\text{lpb} $}\\
\text{s.t.} \quad x_k \in \mathcal{X}_k ~~\forall~k \in \mK,
\end{gather}
with $ (\mathcal{X}_k)_{k \in \mK} $ being convex subsets of $ \mathbb{R}_+ $. As before, we would restate it as
\begin{subequations}
\begin{gather}
\max_{{x} \in \mathbb{R}_+^K}~ \sum_{k \in \mK} \sum_{i \in \mN_k} v_i(x_i) \tag{CP$ _\text{lpb} $}\\
\text{s.t.} \quad x_i = x_j ~~\forall~i,j \in \mN_k,~\forall~k \in \mK \\
\text{s.t.} \quad x_{j_k} \in \mathcal{X}_k ~\text{for some}~j_k \in \mN_k,~\forall~k \in \mK.
\end{gather}
\end{subequations}
Local public goods problems are relevant in those network settings where there is direct interaction between local agents. Wireless transmission is an example of this, each agent affects and is affected by other agents' transmission through interference (SINR) and it is reasonable to assume that this effect is only local and that agents situated far enough (either spatially or in the frequency domain) will not affect each other. Readers may refer to \cite{demosshruti} for a specific mechanism for the local public goods problem.

\subsection{General Centralized Problem} \label{subsecCP}

Here we state the generic form of the centralized optimization problem that we wish to fully implement.
The resource allocation problem is defined for a system with agents indexed in the set $ \mN = \{1,2,\ldots,N\} $, who have utility functions $ \{v_i(\cdot)\}_{i \in \mN} $. The objective is to find the optimum allocation of a single infinitely divisible good that maximizes the sum of utilities 
subject to constraints on the system. The allocation made to agents will be denoted by the vector $ x \in \mathbb{R}_+^N $, with $ x_i $ being the allocation to agent $ i $. 
The centralized optimization problem that we consider is
\begin{gather} \label{CPg}
\max_{x} \sum_{i\in \mN} v_i(x_i) \tag{CP} \\
\text{s.t.} \quad x \in \mathbb{R}_+^N \tag{C$ _\text{1} $}\\
\text{s.t.} \quad A^\top_l x \le c_l ~~\forall~l \in \mL ~~\text{where}~~ A_l \in \mathbb{R}^N, c_l \ge 0. \tag{C$ _\text{2} $}
\end{gather}
The set $ \mL = \{1,2,\ldots,L\} $ indexes all the constraints and $ A_l,c_l $ are all parameters of the optimization problem. It is easy to see that the above set-up covers equality constraints (such as from the public goods example), since we can always write $ x_1 = \ldots = x_N $ as $ x_1 \ge x_2 \ge \ldots \ge x_N \ge x_1 $. 

Denote by $ \mC \subset \mathbb{R}_+^{N} $ the above feasible set. Note that $ \mC $ is a polytope in the first quadrant of $ \mathbb{R}^{N} $, possibly of a lower dimension than $ N $ (due to equality constraints). For convenience, we denote by $ \mL_i $ the set of constraints that involve agent $ i $, i.e. $ \mL_i = \{l \in \mL \mid A_{li} \ne 0 \} $ with $ L_i  \triangleq \vert \mL_i \vert $.  Conversely, define $ \mN^l $ as the set of agents involved at link $ l $ i.e. $ \mN^l = \{i \in \mN \mid l \in \mL_i \} $ with $ N^l \triangleq \vert \mN^l \vert $.




\subsection{Assumptions} \label{subsecassump}

Stated below are the assumptions on~\eqref{CPg} some of which restrict the environment $ \{v_i\}_{i\in \mN} $
and some the constraint set $ \mC $. Some additional technical assumptions will be introduced later to handle
the degenerate cases.

\begin{itemize}
\item[(A1)] \noindent\fbox{\parbox{0.9\textwidth}{For any $ i \in \mN $, $ v_i(\cdot) $ 
is a strictly concave and continuously double differentiable function $ \mathbb{R}_+ \rightarrow \mathbb{R} $.}}
\end{itemize}
The purpose of strict concavity is to have a~\eqref{CPg} whose solution can be described sufficiently by the KKT conditions (note that monotonicity is not assumed).

\begin{itemize}
\item[(A2)] \noindent\fbox{\parbox{0.9\textwidth}{The optimal solution $ x^{\star} $ is bounded such that $ x^{\star} \in \times_{i=1}^{N} \: (d_i,D) $ for some $ 0 < d_i < D $, with $ \ud = \left(d_i\right)_{i=1}^N \in 
\mC $ being arbitrarily close to $ \uz $ and $ D $ being large enough.}} 
\end{itemize}
This assumption is used to eliminate corner cases of~\eqref{CPg}, since they usually require special treatment and make the exposition unnecessarily convoluted. Note that we can always select a point $ \ud \in \mC $ that is arbitrarily close to $ \underline{0} $ because of assumption (A3) below.

The next two assumptions restrict the constraint set $ \mC $.
\begin{itemize}
\item[(A3)] \noindent\fbox{\parbox{0.9\textwidth}{The vector $ \underline{0} \in \mC $, i.e. $ x = \underline{0} $ is feasible.}} 
\end{itemize}
Since we are considering problems where all the variables $ x_i $ have a physical interpretation, it is natural to consider a constraint set whereby every agent getting $ 0 $ allocation is feasible. Note that this also explains the choice $ c_l \ge 0 $.


\begin{itemize}
\item[(A4)] \noindent\fbox{\parbox{0.9\textwidth}{For any constraint $ l \in \mL $ in (C$ _\text{2} $) there are two distinct $ i,j \in \mN $ such that $ A_{li},A_{lj} \ne 0 $. This can also be stated as $ N^l \ge 2 $ $ \forall $ $ l \in \mL $ where $ N^l $ is as defined after~\eqref{CPg}.}}
\end{itemize}
This ensures that there is indeed competition for all the constraints that could possibly be active at optimum. Again this is used to avoid special treatment of corner cases.

Denote by $ \mathcal{V}_0 $ the set of all possible functions $ \{v_i(\cdot)\}_{i \in \mN} $ that satisfy the above assumptions. Then $ \mathcal{V}_0 $ will be the environment for our mechanism design problem.

We will also make an assumption on the overall utility of agents in the system.
\begin{itemize}
\item[(A5)] \noindent\fbox{\parbox{0.9\textwidth}{Apart from the valuation part $ v_i(x_i) $ there is linear taxation component as well that affects agents' utilities. So overall utility of agent $ i $ is
\begin{equation}
u_i(x,t) = v_i(x_i) - t_i. 
\end{equation}}}
\end{itemize}

\subsection{KKT conditions} \label{subseckkt}
The Lagrangian for the optimization problem~\eqref{CPg} is
\begin{equation}
L(x,\lambda,\mu) = \sum_{i \in \mN} v_i(x_i) - \sum_{l \in \mL} \lambda_l \left( A^\top_l x - c_l \right) + \sum_{i \in \mN} \mu_i x_i
\end{equation}
Due to assumption (A2), we can state the KKT conditions only in terms of $ \lambda^{\star} $ and not involve $ \mu^{\star} $. 
\begin{enumerate}
\item \textsc{Primal Feasibility}: $ x^{\star} \in \mC $.

\item \textsc{Dual Feasibility}: For all $ l \in \mL $, 
$ \lambda_l^{\star} \ge 0 $.

\item \textsc{Complimentary Slackness}: For all $ l \in \mL $,
\begin{gather}
\lambda_l^\star \left( A^\top_l x^\star - c_l \right) = 0. 
\end{gather}
\item \textsc{Stationarity}: For all $ i\in \mN $,
\begin{align}
v_i^\prime(x_i^\star) = \sum_{l \in \mL} A_{li} \lambda_l^\star.  
\end{align}
\end{enumerate}

We  will see later that taxation will help us in achieving these KKT conditions when agents play the induced game from the mechanism. For that it will be important to think of the Lagrange multipliers above as ``prices'' where there will be one price per constraint.

\section{Mechanism} \label{secmech}

In this section we describe the proposed mechanism. Description of the mechanism is divided into two parts - allocation and taxes.
A mechanism in the Hurwicz-Reiter framework consists of an environment, an outcome space, a (centralized) correspondence between the environment and the outcome space, a message space and a contract from the message space to the outcome space. In our case the environment is set $ \mathcal{V}_0 $ of all possible valuation functions $ \{v_i(\cdot)\}_{i \in \mN} $. The outcome space is the Cartesian product of the set of all possible allocations and taxes, which is the set $ \mC \times \mathbb{R}^N \subset \mathbb{R}_+^N \times \mathbb{R}^N $. The correspondence between $ \mathcal{V}_0 $ and $ \mC $ is provided implicitly by the centralized  problem~\eqref{CPg}, where for each $ \{v_i(\cdot)\}_{i \in \mN} $ we get an optimal allocation $ x^\star $ by solving~\eqref{CPg} (explicitly one can solve KKT to define $ x^\star $, together with corresponding Lagrange multipliers). This leaves the designer with the task of designing the message space and the contract.


The \textit{message space} for our mechanism is $ \mathcal{S} = \times_{i \in \mN} \mathcal{S}_i $ with $ \mathcal{S}_i = (d_i,+\infty) \times \mathbb{R}_+^{L_i} $ where messages from agents are of the form $ s_i = (y_i, p_i) $ with $ p_i = (p_i^l)_{l \in \mL_i} $ and the total message is denoted by $ s = (s_i)_{i\in\mN} = (y,P) $ with $ y = (y_i)_{i \in \mN} $ and $ P = (p_i)_{i \in \mN} $. The message $ s_i = (y_i,p_i) $ is to be interpreted as follows: $ y_i $ is the level of demand from agent $ i $ and $ p_i $ is the vector of prices that he believes everyone else should pay for the respective constraints. The \textit{contract} $ h: \mathcal{S} \rightarrow \mathbb{R}_+^N \times \mathbb{R}^N $ will specify allocation and taxes for all agents based on the message $ s $, i.e.,  $ h(s) = \left(h_{x,i}(s), h_{t,i}(s)\right)_{i \in \mN} $.

This contract along with agents' utilities will give rise to a one-shot game $ \mathfrak{G} = \left(\mN, \times_{i\in \mN}\mathcal{S}_i, \{\widehat{u}_i\}_{i \in \mN}\right) $ between agents in $ \mN $ where action sets are $ \mathcal{S}_i $ and utilities are
\begin{equation}
\widehat{u}_i(s) = u_i(x,t) = v_i(x_i) - t_i = v_i(h_{x,i}(s)) - h_{t,i}(s).
\end{equation}

\paragraph{Information assumptions} We assume that for any agent $ i $, $ v_i(\cdot) $ is his private information. The mechanism designer doesn't know $ \{v_i(\cdot)\}_{i \in \mN} $ but knows the set $ \mathcal{V}_0 $ to which they belong. Also the constraints (C$ _\text{1} $) and (C$ _\text{2} $) in~\eqref{CPg} (along with the assumptions) are common knowledge i.e. known to agents and the mechanism designer.

We say that the mechanism fully implements the centralized problem~\eqref{CPg} if \textbf{all} Nash equilibria of the induced game $ \mathfrak{G} $ correspond to the unique allocation $ x^\star $ - solution of~\eqref{CPg}, and additionally individual rationality is satisfied i.e. agents are weakly better-off participating in the contract at equilibrium than not participating at all. 

\subsection{Allocation} \label{secalo}

We first describe the allocation in the case where the constraints in (C$ _{2} $) do not have any effective degeneracy i.e. equality constraints. This distinction is based on whether the feasible set $ \mC $ has a proper interior or not. Clearly when there are no equality constraints, the constraint set will have an interior.

For allocation in this case, we first choose a point $ \utheta $ in the interior of the feasible set such that
\begin{equation}
\utheta \in \mathrm{int}(\mC) \cap  \times_{i=1}^{N} \: (0,d_i)
\end{equation} 
Note that we can guarantee the existence of $ \utheta $ since by assumption (A3), $ \underline{0} \in \mC $ and clearly $ \mC $ being intersection of half-planes, is a convex set. Since $ \ud $ can be made arbitrarily close to $ \uz $, the same holds for $  \utheta $ as well. 

Before formally defining the allocation, we define it informally with the help of Fig.~\ref{figone}.
\begin{figure}
\centering
\begin{tikzpicture}
\draw[line width=1.5pt] [->] (-0.5,0) -- (7,0) ;
\draw[line width=1.5pt] [->] (0,-0.5) -- (0,7) ;
\draw[line width=1pt] (0,0) -- (1.5,4.5) ;
\draw[line width=1pt] (1.5,4.5) -- (5,5) ;
\draw[line width=1pt] (5,5) -- (4.5,1) ;
\draw[line width=1pt] (4.5,1) -- (0,0) ;
\draw[line width=1pt,dotted] (1,1) -- (1,6) ;
\node[above] at (1,6) {\small$ x_1 = d_1 $} ;
\draw[line width=1pt,dotted] (1,1) -- (6,1) ;
\node[right] at (6,1) {\small$ x_2 = d_2 $} ;
\draw[fill=black] (0.5,0.5) circle (0.04);
\node at (0.3,0.3) {$ \mathbold{\theta} $};
\draw (0.5,0.5) -- (5.5,2.5) ;
\draw (0.5,0.5) -- (1.5,3) ;
\draw[fill=black] (5.5,2.5) circle (0.04);
\node[above] at (5.5,2.5) {\small$ y $} ;
\draw[fill=black] (4.6447,2.1578) circle (0.04);
\node[below] at (4.6447,2.1578) {\small$ ~~~x $} ;
\draw[fill=black] (1.5,3) circle (0.04);
\node[above] at (1.5,3) {\small$ ~~x=y $} ;
\node[above] at (1.2,1) {\small$ \ud $} ;
\draw[fill=black] (1,1) circle (0.04);
\fill[color=gray,opacity=0.5]  (0,0) -- (1,3) -- (1,1) -- (4.5,1) -- cycle;
\node at (4,4.6) {$ \mC $} ;
\node[left] at (0,7) {$ x_2 / y_2 $} ;
\node[right] at (7,0) {$ x_1 / y_1 $} ;
\end{tikzpicture}
\caption{An illustration of the allocation for $ N = 2 $} \label{figone}
\end{figure}
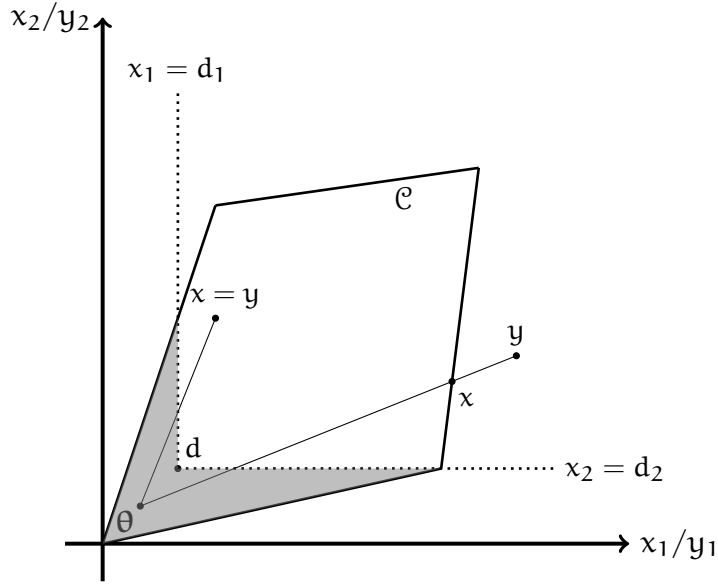
For any demand $ y \in \mathcal{S}_y \triangleq  \times_{i=1}^{N} \: (d_i,+\infty)  $, the allocation $ x $ will be equal to $ y $ if $ y $ is inside the feasibility region $ \mC $. Otherwise the allocation will be the intersection point between the boundary of the feasibility region $ \mC $ and the line joining $ \utheta \text{ with } y $ (that intersection point which lies between $ \utheta \text{ and } y $). The figure shows two different possible $ y $'s and their corresponding allocation $ x $. The shaded region represents that part of $ \mC $ that can never be allocated. Note that since $ \Vert \ud \Vert , \Vert \utheta \Vert \approx 0 $ this is a very small region and thus it doesn't significantly affect the generality of results presented in this paper. Also this can be seen as a partial justification for why assumption (A2) was needed. 



Formally, in case of feasibility set $ \mC $ not having any degeneracy, for a demand $ y \in \mathcal{S}_y $, the allocation $ x $ generated by the contract is
\begin{gather} \label{EQall}
x = \begin{cases}
y &\text{if}~~ y \in \mC \\
y_0 &\text{if}~~ y \notin \mC,
\end{cases}
\end{gather}
where $ y_0 $ is the point on the boundary of $ \mC $ which is also on the line joining $ \utheta \text{ with } y $. Explicitly, if the above intersection happens on the hyperplane $ \mF_l = \{A_l^\top x = c_l \} $ then we can express $ y_0 $ as
\begin{equation}
y_0 - \theta = \alpha_0 (y-\theta)
\quad \text{with} \quad
\alpha_0 = \frac{c_l - A_l^\top \theta}{A_l^\top \left( y - \theta \right)}
\end{equation}
The above allocation mapping is an extension of generalized proportional allocation idea (see \cite{SiAn_multicast_arxiv} and \cite{SiAn_unicast_arxiv}), but modified in accordance with the generality of the problem~\eqref{CPg} and also so that points in the interior of $ \mC $ are covered as well. It is easy to verify in above that for $ y \notin \mC $, because of the way proportional allocation is defined, the expression for $ \alpha_0 $ above is well-defined and positive. Also, if one extends the definition of $ y_0 $ to the boundary i.e $ y  \in \partial \mC $ then it is easy to see that $ y_0 = y $ and $ \alpha_0 = 1 $.

Another useful explicit way of defining $ \alpha_0 $ is as follows.
\begin{gather}
\alpha_0 = \min_{\substack{l \in \mL \\ \alpha_0^l > 0}}~ \alpha_0^l
\quad \text{with} \quad \alpha_0^l = \frac{c_l - A_l^\top \theta}{A_l^\top \left( y - \theta \right)} ~~\forall~l \in \mL
\end{gather}
In the above, $ c_l - A_l^\top \theta > 0 $ for all $ l \in \mL $ (because $ \utheta \in \textrm{int}(\mC) $), but $ A_l^\top \left( y- \theta \right) $ might be positive, negative or zero (since we are considering all $ l \in \mL $ and not restricting ourselves to appropriate regions as before). The purpose of taking minimum of $ \alpha_0^l $ over positive ones is to find the ``innermost'' (or closest) constraint to $ \utheta $ in the (positive) direction of $ y-\theta $.

Now we turn our attention to the case of degenerate feasible set. Since we are mainly interested in dealing with problems of (local) public goods nature,  we define an alternate characterization of the constraint set only for those cases\footnote{This generalizes in a straightforward way.}. We can rewrite constraints from (C$ _2 $) to explicitly account for equalities. For this consider disjoint sets of agents $ (\mN_k)_{k \in \mK} $, where $ \mK = \{1,\ldots,K\} $, such that $ \mN = \sqcup_{k \in \mK} \mN_k $, and the partition $ (\mN_k)_{k \in \mK} $ of $ \mN $ serves the purpose of grouping agents by locality. 
Thus the rewritten constraints are
\begin{subequations}
\begin{gather}
x_i = x_j ~~\forall~i,j \in \mN_k,~\forall~k \in \mK \tag{C$ _{2a} $}\\
A_l^\top x \le c_l ~~\forall~ l \in \widetilde{\mL} \subseteq \mL,  \tag{C$ _{2b} $}
\end{gather}
\end{subequations}
where $ \widetilde{\mL} $ is the subset of original constraints - the ones remaining after equality constraints have been separated. Without stating formally, we assume that the constraints in~(C$ _{2b} $) do not introduce any further degeneracy. 
Now one can consider a further refinement of above to state the constraints only in terms of free variables and thus containing only inequality constraints. For this define $ \widetilde{x} = (x_{j_k})_{k \in \mK} $ where $ j_k $ represents the agent with lowest index from $ \mN_k $ (to make the representation unique). Clearly this is exactly the set of free variables from above and we can rewrite (C$ _2 $) as
\begin{equation}
\widetilde{A}_l^\top \widetilde{x} \le c_l ~~\forall~l \in \widetilde{\mL}, \tag{C$ _{2c} $}
\end{equation}
with $ \widetilde{A}_l \in \mathbb{R}^K $ being derived from $ A_l $ by using (C$ _{2a} $) explicitly. Recall that from assumption (A4) we had at least 2 agents on each constraint, in cases such as above we extend that to have at least 2 agents per constraint in the above definition. 
We denote by $ \widetilde{\mC} \subset \mathbb{R}_+^{K} $ the effective polytope in the $ \widetilde{x}- $space, created by the constraints in~(C$ _{2c} $).


Coming back to allocation, for demand $ y \in \mathcal{S}_y $ we first create modified demand $ \widetilde{y} \in \times_{i=1}^{K} (\widetilde{d}_i,+\infty) \subset \mathbb{R}_+^{K} $ as follows:
\begin{equation} \label{EQytilde}
\widetilde{y}_k = \frac{1}{N_k} \sum_{i \in \mN_k} y_i ~~\forall~k \in \mK,
\end{equation}
i.e. averaging quoted $ y $'s from each group to get one representative demand.
Then proportional allocation, as above, is performed on $ \widetilde{y} $ and the feasibility set $ \widetilde{\mC} $ to get $ \widetilde{x} \in \widetilde{\mC} $ (note that $ \widetilde{\utheta} $ - the restriction of $ \utheta $ to appropriate coordinates, will be used in place of $ \utheta $). This proxy allocation $ \widetilde{x} $ is finally converted back, using (C$ _{2a} $), to get $ x \in \mC $. 



\subsection{Taxes}

For any agent $ i $, we define his tax $ t_i $ as
\begin{subequations}
\begin{gather}
\label{EQtaxl}
t_i = \sum_{l \in \mL_i} t_i^l \\
\label{EQtaxg}
t_i^l = A_{li} x_i \bar{p}_{-i}^l + \left( p_i^l - \bar{p}_{-i}^l\right)^2 + \eta \, \bar{p}_{-i}^l p_i^l  \left( c_l - A_l^\top x \right)^2 \\ 
\label{EQpbar}
\bar{p}_{-i}^l \triangleq \frac{1}{N^l-1}\sum_{\substack{j \in \mN^l \\ j \ne i}} p_j^l
\end{gather}
\end{subequations}
With $ \eta > 0 $ being a positive constant.

In the results section we will see how the various terms in the tax above represent different KKT conditions. Note for example the price-taking nature that $ A_{li}x_i\bar{p}_{-i}^l $ induces; here the agent is supposed to pay for his allocation but the price at which he has to pay is decided by others.


\section{Equilibrium Results} \label{secres}

In this section we state and prove lemmas that will give us the desired full implementation property for the mechanism defined in Section~\ref{secmech}. Later in this section, the presented mechanism will be modified slightly to get an additional property of SBB at NE.


The main result of full implementation would require us to prove that all pure strategy NE of game $ \mathfrak{G} $ result in allocation $ x^\star $ - the unique solution of~\eqref{CPg}, and to show individual rationality. The method for proving this result is as follows: firstly we show (Lemmas~\ref{lempf}-\ref{lemst}) that for any pure strategy NE the corresponding allocation and quoted prices must satisfy the KKT conditions from Section~\ref{subseckkt}. Since by assumptions KKT conditions are both necessary and sufficient, this would mean that if pure strategy NE exist (unique or multiple), the corresponding allocation would have to be the solution of~\eqref{CPg} with quoted prices as the optimal Lagrange multipliers. Then we show existence (in Lemma~\ref{lemexis}) by explicitly writing out the messages that achieve the solution of~\eqref{CPg} and showing that there are no profitable unilateral deviations from those. Finally individual rationality will be shown (in Lemma~\ref{lemir}).

For Lemma~\ref{lemst}, we would need to distinguish between various cases arising from the general form of~\eqref{CPg}. We will make specific assumptions when necessary.

\begin{lemma}[\textbf{Primal Feasibility}] \label{lempf}
For any message $ s = (y,P) \in \mathcal{S} $ with corresponding allocation $ x $, we have $ x \in \mC $ i.e. allocation is always feasible.
\end{lemma}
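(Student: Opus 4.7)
The plan is to verify feasibility by cases, directly tracking through the explicit formula for the proportional allocation. The two conditions defining $\mC$ are (C$_1$) non-negativity and (C$_2$) the linear inequalities $A_l^\top x \le c_l$ for $l \in \mL$, so I would check each.

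First I would handle the non-degenerate case. If $y \in \mC$, then $x = y \in \mC$ trivially. Otherwise $x = y_0 = \theta + \alpha_0(y - \theta)$ with $\alpha_0 = \min_{l:\,\alpha_0^l > 0} \alpha_0^l$. The non-negativity $y_0 \in \mathbb{R}_+^N$ will follow once I show $\alpha_0 \in (0,1]$: since $y_i > d_i > \theta_i > 0$ coordinatewise, $y_0 = (1-\alpha_0)\theta + \alpha_0 y$ is a convex combination of two strictly positive vectors. To show $\alpha_0 \in (0,1]$ I note that $\theta \in \mathrm{int}(\mC)$ gives $c_l - A_l^\top \theta > 0$ for all $l$, and since $y \notin \mC$ there is some $l^\star$ with $A_{l^\star}^\top y > c_{l^\star}$, hence $\alpha_0^{l^\star} = (c_{l^\star} - A_{l^\star}^\top \theta)/(A_{l^\star}^\top(y-\theta)) \in (0,1)$, ensuring the min is well-defined and strictly less than $1$. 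For each constraint $l$, I would split on the sign of $A_l^\top(y-\theta)$: if it is non-positive, then $A_l^\top y_0 = A_l^\top \theta + \alpha_0 A_l^\top(y-\theta) \le A_l^\top \theta < c_l$; if it is positive, then $\alpha_0 \le \alpha_0^l$ by definition, so $A_l^\top y_0 \le A_l^\top \theta + \alpha_0^l A_l^\top(y-\theta) = c_l$. This establishes $y_0 \in \mC$.

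Next, for the degenerate (local-public-goods) case I would verify that the averaged demand $\widetilde{y}_k = \frac{1}{N_k}\sum_{i \in \mN_k} y_i$ lies in $\times_k (\widetilde{d}_k, \infty)$, so that the preceding argument applied to the reduced polytope $\widetilde{\mC}$ with the restricted interior point $\widetilde{\theta}$ gives $\widetilde{x} \in \widetilde{\mC}$. Then I would observe that the lift $x$ defined by $x_i = \widetilde{x}_k$ for all $i \in \mN_k$ satisfies the equality constraints (C$_{2a}$) by construction, satisfies the inequalities (C$_{2b}$) by virtue of $\widetilde{x}$ satisfying (C$_{2c}$), and inherits non-negativity from $\widetilde{x} \in \mathbb{R}_+^K$. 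Hence $x \in \mC$.

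The argument is essentially a case analysis, and I do not anticipate a serious obstacle; the only mildly delicate step is ensuring $\alpha_0$ is well-defined and strictly positive when $y \notin \mC$, which uses crucially that $\theta$ lies in the \emph{interior} of $\mC$ (guaranteed by assumption (A3) together with the construction of $\theta$). Everything else reduces to linear-algebraic manipulation of the formula $y_0 = \theta + \alpha_0(y-\theta)$.
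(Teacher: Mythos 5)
Your proof is correct and follows essentially the same route as the paper's: the same case split between the non-degenerate and degenerate settings, and within the former between $y \in \mC$ and $y \notin \mC$, with the degenerate case reduced to feasibility of $\widetilde{x}$ in $\widetilde{\mC}$ followed by the lift through the equality constraints. The only difference is one of detail: the paper simply asserts that for $y \notin \mC$ the allocation is ``chosen on the boundary of $\mC$'' and is therefore feasible since $\mC$ is closed, whereas you verify this explicitly from $y_0 = \theta + \alpha_0(y-\theta)$ by checking that $\alpha_0$ is well-defined and lies in $(0,1)$ and that every constraint holds at $y_0$ --- a worthwhile but non-essential elaboration.
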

\begin{proof}
%
Consider the case where $ \mC $ doesn't consist of equality constraints i.e. is not degenerate. For $ y \in \mC $, feasibility of allocation $ x $ is obvious. For $ y \notin \mC $, the allocation is chosen on the boundary of $ \mC $, hence it is feasible as well (note that $ \mC $ is closed).

In case when $ \mC $ is degenerate, the dummy allocation $ \widetilde{x} $ is feasible w.r.t. $ \widetilde{\mC} $ using the above argument. But since the allocation $ x $ is produced from $ \widetilde{x} $ whilst satisfying the equality constraints, this means that $ x $ satisfies both the inequality and equality constraints and hence is feasible.
\end{proof}

\begin{lemma}[\textbf{Equal Prices}] \label{lemep}
At any NE $ s = (y,P) $ of the game $ \mathfrak{G} $, $ \forall $ $ l \in \mL $ and $ \forall $ $ i \in \mN^l $, we have $ p_i^l = p^l $ i.e. all agents on a constraint quote the same price for that constraint.
\end{lemma}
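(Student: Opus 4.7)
The plan is to exploit the highly decoupled structure of the tax rule. From the definitions~\eqref{EQtaxl}--\eqref{EQpbar}, agent $i$'s own price quote $p_i^l$ enters his payoff $\hat u_i$ only through the single term $t_i^l$: it does not appear in the allocation mapping (which is a function of $y$ alone), it does not appear in $t_i^{l'}$ for $l'\ne l$, and it does not appear in $v_i(x_i)$. Consequently, at any NE the scalar $p_i^l$ must, for each $i\in\mN^l$ and each $l\in\mL_i$, minimize $t_i^l$ over $p_i^l\ge 0$ with everything else held fixed. Since $t_i^l$ is a convex quadratic in $p_i^l$ with positive leading coefficient, this one-dimensional constrained problem has a unique minimizer given by the KKT conditions.

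Writing out the first-order condition and projecting onto $[0,\infty)$, the best response takes the form
\[
p_i^l \;=\; \max\bigl\{0,\; \beta_l\,\bar p_{-i}^l\bigr\},
\qquad
\beta_l \;\triangleq\; 1 - \tfrac{\eta}{2}\bigl(c_l - A_l^\top x\bigr)^2.
\]
The crucial structural observation is that $\beta_l$ depends only on the (common) allocation $x$ and on constraint-level data, so it is identical across all $i\in\mN^l$. I would then close the argument by a short case split according to the value of $\beta_l$, letting $S_l \triangleq \sum_{j\in\mN^l} p_j^l$ and using $(N^l-1)\bar p_{-i}^l = S_l - p_i^l$.

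If constraint $l$ is active at $x$, then $\beta_l=1$ and the best response reduces to $p_i^l=\bar p_{-i}^l$ for every $i\in\mN^l$; substituting the identity $\bar p_{-i}^l = (S_l-p_i^l)/(N^l-1)$ immediately gives $p_i^l = S_l/N^l$, which is independent of $i$, proving equality. If constraint $l$ is slack with $0\le\beta_l<1$, the best response is $p_i^l=\beta_l\bar p_{-i}^l$ (the projection is inactive because $\bar p_{-i}^l\ge 0$); summing over $i\in\mN^l$ yields $S_l = \beta_l S_l$, i.e.\ $(1-\beta_l)S_l=0$, which forces $S_l=0$ and hence $p_i^l=0$ for every $i\in\mN^l$. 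If $\beta_l\le 0$ (which is allowed when $\eta(c_l-A_l^\top x)^2\ge 2$), the projection is active at every $i$, so $p_i^l=0$ for every $i\in\mN^l$ directly. In all three sub-cases the prices on constraint $l$ coincide.

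The main obstacle is not conceptual but bookkeeping: one must justify that $p_i^l$ can be varied unilaterally without side effects, and handle the nonnegativity constraint cleanly (especially the $\beta_l\le 0$ sub-case) so that the averaging identity $(1-\beta_l)S_l=0$ is applied only where it is actually valid. Feasibility of $x$, already supplied by Lemma~\ref{lempf}, is what guarantees $c_l - A_l^\top x\ge 0$ and hence that ``active'' versus ``slack'' is a genuine dichotomy driving the case split above.
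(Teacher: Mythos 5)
Your proposal is correct, and it takes a noticeably different route from the paper's. Both arguments rest on the same structural observation---that $p_i^l$ enters $\hat u_i$ only through the last two terms of $t_i^l$, which form a strictly convex quadratic in $p_i^l$---but the paper stops at the cheapest possible use of this fact: it picks an agent with $p_i^l > \bar p_{-i}^l$ (which must exist if the prices are not all equal), exhibits the single downward deviation ${p_i^l}' = \bar p_{-i}^l$, and checks that the utility gain $(p_i^l-\bar p_{-i}^l)^2 + \eta\,\bar p_{-i}^l\,(p_i^l-\bar p_{-i}^l)(c_l-A_l^\top x)^2$ is strictly positive, yielding a contradiction. You instead solve the full one-dimensional best-response problem, obtaining $p_i^l = \max\{0,\beta_l\bar p_{-i}^l\}$ with $\beta_l = 1-\tfrac{\eta}{2}(c_l-A_l^\top x)^2$ common to all $i\in\mN^l$, and then close with the averaging identity $(N^l-1)\bar p_{-i}^l = S_l - p_i^l$; your three sub-cases all check out (in particular, summing $p_i^l=\beta_l\bar p_{-i}^l$ over $i$ does give $S_l=\beta_l S_l$, forcing $S_l=0$ when $\beta_l<1$). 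What your extra work buys is substantial: the slack cases force $p_i^l=0$ for all $i$, so your argument delivers complementary slackness (the paper's Lemma~\ref{lemcs}) as a by-product rather than requiring the separate $\varepsilon$-deviation argument the paper uses there; the cost is the bookkeeping around the nonnegativity projection, which you handle correctly. One minor remark: the active/slack dichotomy you invoke is not really needed---all that matters is whether $(c_l-A_l^\top x)^2$ is zero or positive, i.e.\ whether $\beta_l=1$ or $\beta_l<1$---so the appeal to Lemma~\ref{lempf} is harmless but inessential.
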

\begin{proof}
Suppose there exists a constraint $ l \in \mL $ such that $ (p_j^l)_{j \in \mN^l} $ are not all equal. Then there exists an agent $ i \in \mN^l $ such that $ p_i^l > \bar{p}_{-i}^l $ (this is clear from definition in~\eqref{EQpbar}).

Consider the downwards price deviation $ {p_i^l}^\prime = \bar{p}_{-i}^l \ge 0 $ for this agent (whilst keeping all other values in $ s_i $ the same). The difference in utility will be
\begin{subequations}
\begin{gather}
\Delta \widehat{u}_i = \widehat{u}_i(s_i^\prime,s_{-i}) - \widehat{u}_i(s_i,s_{-i}) = -({p_i^l}^\prime - \bar{p}_{-i}^l)^2 + (p_i^l - \bar{p}_{-i}^l)^2 \nonumber \\
- {} \etas \bar{p}_{-i}^l {p_i^l}^\prime \left( c_l - A_l^\top x \right)^2 + \etas \bar{p}_{-i}^l p_i^l  \left( c_l - A_l^\top x \right)^2 \\
=  \underbrace{(p_i^l - \bar{p}_{-i}^l)^2}_{> 0} \, + \,\etas \bar{p}_{-i}^l (\underbrace{ p_i^l - \bar{p}_{i}^l }_{> 0}) \left( c_l - A_l^\top x \right)^2 > 0
\end{gather}
\end{subequations}
Hence such a unilateral deviation is profitable. Therefore at NE, all quoted prices for any constraint have to be the same.
\end{proof}

Since we have shown that there is a common price $ p^l $ for any constraint $ l \in \mL $, here onwards we can refer only to $ p^l $'s instead of $ p_i^l $'s. Here onwards, whenever we refer to quoted price $ P $, it is to be interpreted as $ (p^l)_{l \in \mL} $.

\begin{lemma}[\textbf{Dual Feasibility}] \label{lemdf}
For any $ l \in \mL $, $ p^l \ge 0 $.
\end{lemma}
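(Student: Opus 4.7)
The plan is to observe that this lemma is essentially immediate from two ingredients already in hand: the specification of the message space, and Lemma~\ref{lemep}. Recall that the mechanism's message space for each agent $i$ is $\mathcal{S}_i = (d_i, +\infty) \times \mathbb{R}_+^{L_i}$, so every price component $p_i^l$ that an agent can legally quote is nonnegative \emph{by construction}. In particular, this holds at any strategy profile, hence at any NE $s = (y,P)$.

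First I would invoke Lemma~\ref{lemep}, which establishes that at any NE, all agents $i \in \mN^l$ that appear in constraint $l$ quote the same value, and this common value is precisely the scalar $p^l$ introduced after that lemma. Since each individual $p_i^l$ lies in $\mathbb{R}_+$ simply because it is a valid message, we obtain $p^l = p_i^l \ge 0$ for an arbitrary representative $i \in \mN^l$, which is exactly the claim.

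I do not anticipate any obstacle, because unlike the subsequent KKT conditions (complementary slackness and stationarity), dual feasibility does not require an equilibrium deviation argument at all once the equal-prices lemma is available. The sign of the prices is pinned down directly by the designer's choice of message space $\mathbb{R}_+^{L_i}$; no utility comparison or best-response calculation is needed. The only thing worth remarking on in the writeup is why restricting the message space this way is without loss of generality for the purpose of implementing~\eqref{CPg}, namely that negative Lagrange multipliers could never correspond to a KKT point of the convex program, so there is no optimal allocation being excluded by the restriction.
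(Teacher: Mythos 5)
Your argument is exactly the paper's: the common price $p^l$ exists by Lemma~\ref{lemep}, and it is nonnegative simply because the message space constrains each quoted $p_i^l$ to lie in $\mathbb{R}_+$; the paper dispenses with this in one line ("obvious since quoted prices are all non-negative"). Your additional remark about why the nonnegativity restriction excludes no KKT point is a reasonable aside but not part of the paper's proof.
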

\begin{proof}
This is obvious since quoted prices are all non-negative.
\end{proof}

\begin{lemma}[\textbf{Complimentary Slackness}] \label{lemcs}
For any NE $ s = (y,P) $ of game $ \mathfrak{G} $ with corresponding allocation $ x $, we have for any $ l \in \mL $
\begin{equation}
p^l \left( c_l - A_l^\top x \right) = 0
\end{equation}
\end{lemma}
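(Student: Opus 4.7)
The plan is to leverage Lemma~\ref{lemep} together with the fact that the allocation $x$ is determined by the demand profile $y$ alone (and not by the price messages $P$), so as to reduce agent $i$'s best-response problem in the single coordinate $p_i^l$ to minimizing a one-dimensional convex quadratic. Complementary slackness will then fall out of the first-order condition. I would fix a constraint $l \in \mL$ and, using $N^l \ge 2$ from assumption (A4), pick any agent $i \in \mN^l$ so that $\bar{p}_{-i}^l$ is well-defined.

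The key structural observation to exploit is that since $x$ does not depend on $P$, among all summands of $t_i$ only the $p_i^l$-containing piece of $t_i^l$ is affected by a unilateral deviation in $p_i^l$; everything else (including $v_i(x_i)$ and every $t_i^{l'}$ for $l' \ne l$, and even the $A_{li}x_i\bar{p}_{-i}^l$ term of $t_i^l$) is constant in $p_i^l$. Substituting the identity $\bar{p}_{-i}^l = p^l$ from Lemma~\ref{lemep} and dropping a $p_i^l$-independent constant, agent $i$'s best-response problem reduces to minimizing
\[
f(p_i^l) \;=\; (p_i^l - p^l)^2 \;+\; \etas p^l\,(c_l - A_l^\top x)^2\, p_i^l
\]
over $p_i^l \in \mathbb{R}_+$. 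This is a convex quadratic in $p_i^l$, and the NE condition requires its minimizer over $\mathbb{R}_+$ to be $p_i^l = p^l$.

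The last step I would take is a short case split on whether the common price $p^l$ sits on the boundary of $\mathbb{R}_+$. If $p^l = 0$, then $p^l(c_l - A_l^\top x) = 0$ holds trivially. If $p^l > 0$, the NE minimizer $p_i^l = p^l$ is interior to $\mathbb{R}_+$, so the first-order condition
\[
f'(p^l) \;=\; 2(p^l - p^l) + \etas p^l\,(c_l - A_l^\top x)^2 \;=\; \etas p^l\,(c_l - A_l^\top x)^2 \;=\; 0
\]
must hold; combined with $\eta > 0$ and $p^l > 0$ this forces $c_l - A_l^\top x = 0$, and hence $p^l(c_l - A_l^\top x) = 0$ in this case as well. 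The only conceptual subtlety I anticipate is the independence of $x$ from $P$, which is what makes the reduction to a clean one-dimensional quadratic work; without it, one would have to track how the proportional-allocation rule of Section~\ref{secalo} responds to price deviations, which is both unnecessary and would obscure the argument.
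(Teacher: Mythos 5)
Your proof is correct and is essentially the paper's argument: both isolate the two $p_i^l$-dependent terms of $t_i^l$ (using that $x$ depends only on $y$ and that $\bar{p}_{-i}^l = p^l$ by Lemma~\ref{lemep}) and observe that when $p^l > 0$ and $c_l > A_l^\top x$ a small downward move in $p_i^l$ is profitable. The paper writes this as an explicit deviation $p_i^{l\prime} = p^l - \varepsilon$ with $\Delta \widehat{u}_i = -\varepsilon(\varepsilon - a)$, which is just the finite-$\varepsilon$ version of your first-order condition.
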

\begin{proof}
Suppose there exists a constraint $ l \in \mL $ such that, at NE, $ p^l > 0 $ and $ c^l > A_l^\top x $. We will take a downwards price deviation $ {p_i^l}^\prime = p^l - \varepsilon > 0 $ for any agent $ i \in \mN^l $. Again we can write the difference in utility as
\begin{subequations}
\begin{gather}
\Delta \widehat{u}_i = -({p_i^l}^\prime - \bar{p}_{-i}^l)^2 + 0 - \etas \bar{p}_{-i}^l {p_i^l}^\prime \left( c_l - A_l^\top x \right)^2 + \etas \bar{p}_{-i}^l p_i^l \left( c_l - A_l^\top x \right)^2  \\
= -(-\varepsilon)^2  + \etas p^l \left(\varepsilon \right) \left( c_l - A_l^\top x \right)^2 = - \varepsilon \left( \varepsilon - a \right)
\end{gather}
\end{subequations}
with $ a = \etas p^l \left( c_l - A_l^\top x \right)^2 > 0 $. Now choosing $ 0 < \varepsilon < \min(a,p^l) $ this can be made into a profitable deviation.
\end{proof}

Next we have the lemma for the stationarity property from KKT. For this we have to make assumptions on the feasible set $ \mC $.
\begin{itemize}
\item[(A6)] \noindent\fbox{\parbox{0.9\textwidth}{
\begin{equation}
\widetilde{A}_l \in \mathbb{R}_+^K ~~\forall~ l \in \widetilde{\mL}
\end{equation}
where $ \widetilde{A}_l $ was defined in (C$ _{2c} $) (Section~\ref{secalo}). }}
\end{itemize}
The above clearly translates into $ A_l \in \mathbb{R}_+^N ~~\forall~ l \in {\mL} $ in case there is no degeneracy. But when there is degeneracy, the above assumption states that the effective polytope $ \widetilde{\mC} $ in the $ \widetilde{x}- $space has faces whose outward normals are pointing away from the origin.

\begin{lemma}[\textbf{Stationarity}] \label{lemst}
At any NE $ s = (y,P) $ of game $ \mathfrak{G} $ with corresponding allocation $ x $, we have for any agent $ i \in \mN $,
\begin{equation} \label{EQstat1}
v_i^\prime(x_i) = \sum_{l \in \mL_i} A_{li} p^l
\end{equation}
\end{lemma}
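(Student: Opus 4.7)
The plan is to derive the stationarity identity from agent $i$'s best-response first-order condition in $y_i$, after first simplifying $\widehat{u}_i$ at the NE using the two preceding lemmas.

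At any NE, Lemma~\ref{lemep} gives $\bar{p}_{-i}^l = p_i^l = p^l$ for every $l \in \mL_i$, so the middle ``guessing'' term $(p_i^l - \bar{p}_{-i}^l)^2$ vanishes and is insensitive to a $y_i$-deviation (since $p_i^l$ is held fixed). Lemma~\ref{lemcs} then yields, for each $l \in \mL_i$, either $p^l = 0$ (so the entire $l$-contribution to the tax is zero and has zero derivative) or $c_l - A_l^\top x = 0$ (so the derivative of the penalty $\eta\,\bar{p}_{-i}^l p_i^l(c_l - A_l^\top x)^2$ with respect to $y_i$ is zero, thanks to the factor $c_l - A_l^\top x$ out front). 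Consequently the NE first-order condition---or, where the allocation has a kink, the corresponding one-sided conditions---collapses to
\begin{equation*}
\Bigl(v_i^\prime(x_i) - \sum_{l \in \mL_i} A_{li}\, p^l\Bigr)\cdot \frac{\partial x_i}{\partial y_i} = 0,
\end{equation*}
so it suffices to verify $\partial x_i/\partial y_i \ne 0$.

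I would then do a case analysis on the location of $y$ relative to $\mC$, using the explicit allocation rule of Section~\ref{secalo}. If $y \in \mathrm{int}(\mC)$, then $x = y$ locally and $\partial x_i/\partial y_i = 1$; Lemma~\ref{lemcs} forces every $p^l = 0$, so stationarity reduces to the unconstrained FOC $v_i^\prime(x_i) = 0$, which is exactly what the FOC delivers. If $y \notin \mC$ with binding hyperplane $l$ and $\alpha_0 < 1$, then differentiating $x = \theta + \alpha_0(y - \theta)$ and using the on-boundary identity $c_l - A_l^\top\theta = \sum_j A_{lj}(x_j - \theta_j)$ gives
\begin{equation*}
\frac{\partial x_i}{\partial y_i} = \alpha_0\cdot\frac{\sum_{j\ne i} A_{lj}(x_j - \theta_j)}{\sum_j A_{lj}(x_j - \theta_j)},
\end{equation*}
which is strictly positive by (A4), (A6), and the bound $x_j > d_j > \theta_j$ from (A2). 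If $y \in \partial \mC$, the allocation is not smooth; the left (interior-side) derivative of $x_i$ in $y_i$ is $1$, the right (exterior-side) one equals the displayed expression with $\alpha_0 = 1$, both are strictly positive, and the two one-sided NE inequalities together sandwich $v_i^\prime(x_i) - \sum_{l \in \mL_i} A_{li}\, p^l$ to zero.

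For the degenerate (local public goods) case I would reduce to the non-degenerate situation on the lower-dimensional polytope $\widetilde{\mC}$: since $y_i$ enters the mechanism only through $\widetilde{y}_k = N_k^{-1}\sum_{j \in \mN_k} y_j$ and $x_i = \widetilde{x}_k$, the derivative $\partial x_i/\partial y_i$ factors as $N_k^{-1}\cdot \partial \widetilde{x}_k/\partial \widetilde{y}_k$, and the previous analysis applies verbatim inside $\widetilde{\mC}$ thanks to (A6) (which is imposed precisely in the reduced space). The main obstacle is not a single hard estimate but rather the bookkeeping around the proportional-allocation kink on $\partial \mC$ (requiring one-sided derivatives and the observation that both have the same sign), and, in the degenerate case, carefully propagating the equality-constraint prices through the lifting back from $\widetilde{x}$ to $x$ via (C$_{2a}$) so that they appear on the right-hand side of the stationarity identity with the correct $A_{li}$ coefficients.
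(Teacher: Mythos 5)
Your proposal is correct and follows essentially the same route as the paper: reduce the NE first-order condition in $y_i$ to $\bigl(v_i^\prime(x_i) - \sum_{l \in \mL_i} A_{li}p^l\bigr)\,\partial x_i/\partial y_i = 0$ (using Lemmas~\ref{lemep} and~\ref{lemcs} to kill the derivatives of the non-payment tax terms), then show $\partial x_i/\partial y_i > 0$ by cases on $y$ versus $\mC$ via the explicit proportional-allocation formula and (A2), (A4), (A6), and handle the degenerate case by composing the averaging, proportional-allocation, and lifting maps. You are in fact somewhat more careful than the paper about the one-sided derivatives at the kink on $\partial\mC$ and about why the tax's dependence on $x_j$, $j\ne i$, is harmless at NE, but these are refinements of the same argument rather than a different one.
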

\begin{proof}
%
For any agent $ i \in \mN $ we can write
\begin{equation}
\frac{\partial \widehat{u}_i(s)}{\partial y_i} = \left( v_i^\prime(x_i) - \frac{\partial t_i}{\partial x_i} \right) \frac{\partial x_i}{\partial y_i}
\end{equation}
Note that $ \beta \equiv \partial x_i/\partial y_i $ isn't always defined, since the allocation is continuous but only piecewise differentiable for $ y \notin \mC $. But right and left derivatives are always defined. Noting that $ v_i^\prime(x_i) - \frac{\partial t_i}{\partial x_i} = 0 $ is equivalent to~\eqref{EQstat1} (using previous lemmas characterizing NE), it will be sufficient for us if we show that $ \beta > 0 $. Since then without making $ v_i^\prime(x_i) - \frac{\partial t_i}{\partial x_i} = 0 $ there is always an upwards or downwards deviation in $ y_i $ to make agent $ i $ strictly better-off.  

For $ y \in \mC $ we have $ x_i = y_i $, therefore $ \beta = 1 \ne 0 $. Otherwise, first consider the case where $ \mC $ is non-degenerate. Then
\begin{subequations}
\begin{gather}
x_i = \theta_{i} + \alpha_0 \left(y_i - \theta_{i}\right)
\quad \text{with} \quad
\alpha_0 = \frac{c_l - A_l^\top \theta}{A_l^\top \left( y - \theta \right)}
\\
\frac{\partial x_i}{\partial y_i} = \alpha_0 + \left(y_i - \theta_{i}\right) \frac{\partial \alpha_0}{\partial y_i}
\quad \text{with} \quad
\frac{\partial \alpha_0}{\partial y_i} = \frac{-\alpha_0 A_{li}}{A_l^\top \left( y - \theta \right)} \\
\Rightarrow~~\frac{\partial x_i}{\partial y_i} = \alpha_0 \left( 1 - \frac{A_{li}\left(y_i - \theta_{i}\right)}{A_l^\top \left( y - \theta \right)} \right) = \alpha_0 \Big( \sum_{j \in \mN^l \backslash \{i\}} A_{lj}\left(y_j - \theta_{j}\right) \Big)
\end{gather}
\end{subequations}
Noting that $ \alpha_0 > 0 $ and that due to assumption (A4) there are always at least two agents on any constraint, clearly the above is positive (since $ \theta_j < d_j < y_j $ $ \forall~j $).

In the case when $ \mC $ does have equality constraints, note that allocations are created by composing the maps $ y \xmapsto{A} \widetilde{y} \xmapsto{B} \widetilde{x} \xmapsto{C} x $ where $ B $ is proportional allocation
on the set $ \widetilde{\mC} $ and $ A,C $ are linear with positive coefficients\footnote{Since they consist of averaging,~\eqref{EQytilde}, and assigning same value to multiple positions.}. Thus using the fact that for proportional allocation we have $ \beta > 0 $ (from above) and that $ A,C $ have positive coefficients, we clearly have $ \beta > 0 $ here as well.
\end{proof}

This completes the necessary part of our proof. The argument in the previous Lemmas applied to pure strategy NE if there exited any. Now we will prove the existence of pure strategy NE which will give the optimal allocation $ x^\star $.

\begin{lemma}[\textbf{Existence}] \label{lemexis}
For game $ \mathfrak{G} $, there exists NE $ s = (y,P) \in \mathcal{S} $ such that the corresponding allocation $ x $ and prices $ (p^l)_{l \in \mL} $ satisfy the KKT conditions as $ x^\star $ and $ (\lambda_l^\star)_{l \in \mL} $, respectively.
\end{lemma}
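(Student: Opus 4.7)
The plan is to exhibit an explicit candidate NE built from the KKT data of~\eqref{CPg} and to show no unilateral deviation is profitable by exploiting the structure of the tax. Let $s_i^\star = (y_i^\star, p_i^\star)$ with $y_i^\star = x_i^\star$ and $p_i^{\star,l} = \lambda_l^\star$ for every $l \in \mL_i$. Assumption~(A2) gives $x_i^\star > d_i$ and dual feasibility gives $\lambda_l^\star \geq 0$, so $s_i^\star \in \mathcal{S}_i$. Under $s^\star$ the aggregate demand $y^\star = x^\star$ is primal feasible, so the mechanism produces allocation $x^\star$, and $\bar{p}_{-i}^l = \lambda_l^\star$ for every $i$ and $l \in \mL_i$. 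Complementary slackness makes the third term $\eta\,\bar{p}_{-i}^l p_i^l (c_l - A_l^\top x)^2$ of each $t_i^l$ vanish (either $\lambda_l^\star=0$ or $A_l^\top x^\star = c_l$), and the self-correction term $(p_i^l - \bar{p}_{-i}^l)^2$ vanishes trivially, leaving $\widehat{u}_i(s^\star) = v_i(x_i^\star) - x_i^\star \sum_{l \in \mL_i} A_{li} \lambda_l^\star$.

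To check no profitable deviation, I fix $s_{-i}^\star$ (so $\bar{p}_{-i}^l$ is still $\lambda_l^\star$) and consider any $s_i = (y_i, p_i) \in \mathcal{S}_i$. Since $\lambda_l^\star \geq 0$ and $p_i^l \geq 0$, both the squared and slackness terms of $t_i^l$ are non-negative, so $t_i \geq x_i(y_i) \sum_{l \in \mL_i} A_{li} \lambda_l^\star$. Setting $F(z) := v_i(z) - z \sum_{l \in \mL_i} A_{li} \lambda_l^\star$, this yields $\widehat{u}_i(s_i, s_{-i}^\star) \leq F(x_i(y_i))$. The function $F$ is strictly concave by~(A1), and the stationarity KKT condition gives $F'(x_i^\star) = 0$, so $F$ is globally maximized at $z = x_i^\star$. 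Hence $\widehat{u}_i(s_i, s_{-i}^\star) \leq F(x_i^\star) = \widehat{u}_i(s^\star)$, proving that $s^\star$ is an NE. The degenerate case follows by the identical argument after replacing $\mC$ and $x$ with the reformulation $\widetilde{\mC}$ and $\widetilde{x}$.

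The main obstacle I anticipate is cleanly handling deviations that push $y$ outside $\mC$, where $x_i(y_i)$ is given by the proportional projection and typically differs from $y_i$. The crucial observation is that the utility bound above is agnostic to the map $y_i \mapsto x_i(y_i)$: only the value $x_i(y_i) \in \mathbb{R}_+$ enters, and this is guaranteed by feasibility (Lemma~\ref{lempf}). This collapses the best-response analysis to scalar maximization of the concave function $F$ on $\mathbb{R}_+$ and avoids any boundary-geometry case split. A secondary point, easily checked but worth stating, is that the separate $p_i^l$-deviation channel is already dominated by this bound because the price-dependent terms of $t_i^l$ enter additively and non-negatively, so optimizing over $p_i^l$ cannot beat the value obtained by matching $\bar{p}_{-i}^l$.
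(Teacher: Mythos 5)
Your proposal is correct and follows essentially the same route as the paper's proof: exhibit $y = x^\star$, $p_i^l = \lambda_l^\star$ as the candidate, note that complementary slackness and equal prices kill all but the payment term, and then bound any unilateral deviation's utility by the strictly concave function $F(z) = v_i(z) - z\sum_{l \in \mL_i} A_{li}\lambda_l^\star$, which stationarity shows is globally maximized at $x_i^\star$. The paper writes the same chain of inequalities in the reverse direction, but the decomposition and the key observation (the bound is agnostic to the allocation map, so no boundary-geometry case split is needed) are identical.
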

\begin{proof}
This proof will be done in two separate parts: first we will show that there exists $ s \in \mathcal{S} $ such that the allocation through the contract is $ x^\star $ and prices are Lagrange multipliers $ (\lambda^\star_l)_{l \in \mL} $. Secondly, we will show that for all claimed NE points, there doesn't exist a unilateral deviation that is profitable.

Existence of prices that match Lagrange multipliers $ \lambda_l^\star $ is obvious, since agents can quote any price in $ \mathbb{R}_+ $ and dual feasibility says that $ \lambda_l^\star \ge 0 $. For allocation to be the same as $ x^\star $, notice that due to assumption (A2) all possible solutions $ x^\star $ lie in the set $ \mathcal{Y} \triangleq \mC \cap \mathcal{S}_y $ with $ \mathcal{S}_y = \times_{i=1}^{N} \: (d_i,+\infty) $. Quoted demand $ y $ can be anywhere in the set $ \mathcal{S}_y $ and for $ y \in \mC $ (i.e. $ y \in \mathcal{Y} $) the allocation is $ x = y $ and therefore each point $ x \in \mathcal{Y} $ is achievable as allocation by quoting the same point $ y = x \in \mathcal{Y} $. (However also note that points on the boundary of $ \mathcal{Y} $ are achievable as allocation by many $ y $'s outside $ \mathcal{Y} $ as well.)

Now we will check for unilateral profitable deviations. When quoted demand creates allocation as $ x^\star $ and the quoted prices are equal and equal to $ \lambda^\star $, the utility for any agent $ i $ is
\begin{equation}
\widehat{u}_i(s) = v_i(x_i^\star) - x_i^\star \sum_{l \in \mL_i} A_{li}\lambda_l^\star
\end{equation}
Due to the fact that $ f(x) = v_i(x) - x \sum_{l \in \mL_i} A_{li}\lambda_l^\star $ is strictly concave and $ f^\prime(x_i^\star) = 0 $ (Stationarity) we can conclude that $ x_i^\star $ is a global maximizer of $ f $. With this we have
\begin{subequations}
\begin{gather}
\widehat{u}_i(s) = v_i(x_i^\star) - x_i^\star \sum_{l \in \mL_i} A_{li} \lambda_l^\star \ge v_i(x_i) - x_i \sum_{l \in \mL_i} A_{li} \lambda_l^\star \\
\label{EQdev}
\ge v_i(x_i) - x_i \sum_{l \in \mL_i} A_{li} \lambda_l^\star - \sum_{l \in \mL_i} \left(p_i^l - \lambda_l^\star \right)^2 - \sum_{l \in \mL_i} \etas p_i^l \lambda_l^\star \left(c_l - A_l^\top x \right)^2
\end{gather}
\end{subequations}
The above is true for any $ \left(x_j\right)_{j \in \mN} $ and $ \left(p_i^l\right)_{l \in \mL_i} $ non-negative and final inequality holds because the additional terms are non-positive. Now any unilateral deviation $ (s_i^\prime, s_{-i}) $ from agent $ i $ will result in utility for agent $ i $ which has the form as in~\eqref{EQdev}. Hence we have proved that with unilateral deviations from messages that have allocation $ x^\star $ and prices $ \lambda^\star $, the corresponding agent can never be strictly better off.
\end{proof}

\begin{lemma}[\textbf{Individual Rationality}] \label{lemir}
For any NE $ s = (y,P) $ of game $ \mathfrak{G} $ with corresponding allocation $ x $ and taxes $ t $, we have for any $ i \in \mN $
\begin{equation}
u_i(x,t) \ge u_i(0,0) 
\end{equation}
\end{lemma}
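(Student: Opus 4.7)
The plan is to reduce this to a direct computation of the NE utility, after which concavity of $v_i$ delivers the bound. I would first use the earlier NE-characterization lemmas to strip the tax formula down to a single term. By Lemma~\ref{lemep} (Equal Prices), at NE we have $p_i^l = p^l = \bar{p}_{-i}^l$ for every $l \in \mL_i$, so the quadratic penalty $(p_i^l - \bar{p}_{-i}^l)^2$ vanishes. By Lemma~\ref{lemcs} (Complimentary Slackness), $p^l(c_l - A_l^\top x) = 0$, and squaring gives $(p^l)^2(c_l - A_l^\top x)^2 = 0$, killing the third term. Hence
\begin{equation*}
t_i^l \;=\; A_{li}\, x_i\, p^l, \qquad t_i \;=\; x_i \sum_{l \in \mL_i} A_{li}\, p^l.
\end{equation*}

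Next I would invoke Lemma~\ref{lemst} (Stationarity), which identifies the bracketed sum as $v_i'(x_i)$, so that $t_i = x_i\, v_i'(x_i)$ and consequently
\begin{equation*}
u_i(x,t) \;=\; v_i(x_i) - x_i\, v_i'(x_i).
\end{equation*}
Then I would close the argument by concavity of $v_i$ (Assumption~(A1)): for any concave differentiable function, the tangent line at $x_i$ lies above the graph, giving $v_i(0) \le v_i(x_i) + v_i'(x_i)(0 - x_i) = v_i(x_i) - x_i\, v_i'(x_i)$. Since $u_i(0,0) = v_i(0)$, this is exactly the desired inequality $u_i(x,t) \ge u_i(0,0)$.

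There is no genuine obstacle here: the entire content is algebraic bookkeeping that exploits the four KKT-style lemmas already proved, plus a one-line application of the subgradient inequality for concave functions. The only point worth flagging is that the argument does not require any sign assumption on $v_i'(x_i)$ or on the individual coefficients $A_{li}$; the tangent-line inequality holds regardless, which is convenient because in the degenerate (equality-constrained) case the reduced $\tilde A_l$ need not share all the sign structure of the original $A_l$.
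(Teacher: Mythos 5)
Your proposal is correct and follows essentially the same route as the paper: both reduce the NE tax to $t_i = x_i\sum_{l\in\mL_i}A_{li}p^l$ via the Equal Prices and Complimentary Slackness lemmas, and then combine Stationarity with concavity of $v_i$ (the paper phrases this as the auxiliary function $f(z)=v_i(z)-z\sum_{l\in\mL_i}A_{li}p^l$ being strictly concave with $f'(x_i)=0$, hence $f(x_i)>f(0)$, while you apply the tangent-line inequality to $v_i$ directly --- the same first-order concavity fact). No gaps; your observation that no sign assumptions on $A_{li}$ or $v_i'(x_i)$ are needed is also consistent with the paper, which does not invoke (A6) in this lemma.
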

\begin{proof}
Recall that $ u_i(x,t) = v_i(x_i) - t_i $. For any agent $ i \in \mN $, define
\begin{equation}
f(z) \triangleq v_i(z) - z \sum_{l \in \mL_i} A_{li} p^l
\end{equation}
Note that $ u_i(0,0) = v_i(0) = f(0) $ and at NE $ u_i(x,t) = v_i(x_i) - x_i \sum_{l \in \mL_i} A_{li} p^l = f(x_i)  $ (also recall that except the first term, all other tax terms go to zero at NE, refer~\eqref{EQtaxg}).

By stationarity property we know $ f^\prime(x_i) = v_i^\prime(x_i) - \sum_{l \in \mL_i} A_{li} p^l = 0 $ and by assumption (A1) it is clear that $ f(\cdot) $ is strictly concave. Therefore $ f^\prime(y) > 0 $ for $ 0 < y < x_i $ and we can claim by mean value theorem that $ f(x_i) > f(0) $.
\end{proof}


\begin{theorem}[\textbf{Full Implementation}] \label{theomainnew}
All Nash equilibria $ s = (y,P)  $ of the game $ \mathfrak{G} $ have the same corresponding allocation $ x $. Furthermore, this allocation is identical to the unique solution of~\eqref{CPg} i.e. $ x=x^\star $. Also, all agents are weakly better-off at equilibrium than by not participating at all.
\end{theorem}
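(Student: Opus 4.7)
The plan is to simply assemble the preceding lemmas. First I would observe that Lemmas~\ref{lempf}--\ref{lemst} together show that at every pure-strategy Nash equilibrium $s=(y,P)$ of $\mathfrak{G}$, the allocation $x$ produced by the contract, together with the common quoted prices $(p^l)_{l\in\mL}$ (which are well-defined by Lemma~\ref{lemep}), satisfies all four KKT conditions of~\eqref{CPg}: primal feasibility (Lemma~\ref{lempf}), dual feasibility (Lemma~\ref{lemdf}), complementary slackness (Lemma~\ref{lemcs}), and stationarity (Lemma~\ref{lemst}).

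Next I would invoke assumption (A1), which makes every $v_i$ strictly concave, together with the convexity of the feasible polytope $\mC$. Under these hypotheses, the KKT conditions are both necessary and sufficient for optimality of~\eqref{CPg}, and strict concavity of the objective on a convex set guarantees uniqueness of the optimizer $x^\star$. Therefore any $x$ satisfying the KKT conditions must coincide with $x^\star$, and the corresponding $(p^l)_{l\in\mL}$ are Lagrange multipliers $(\lambda_l^\star)_{l\in\mL}$. This disposes of both the first and second assertions of the theorem simultaneously: all NE produce the \emph{same} allocation, and that allocation is $x^\star$.

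It then remains to ensure that there is at least one NE (otherwise the statement is vacuous), and to verify the individual-rationality claim. Existence is exactly Lemma~\ref{lemexis}, which exhibits explicit messages that (i)~produce allocation $x^\star$ and common quoted prices $\lambda^\star$ through the contract and (ii)~admit no profitable unilateral deviation. Individual rationality is exactly Lemma~\ref{lemir}: at any NE, $u_i(x,t) = f(x_i) \ge f(0) = u_i(0,0)$ for each agent $i$. Combining these three observations yields the full statement.

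There is no substantive obstacle left at this stage; the assembly is essentially bookkeeping. The only point meriting a brief remark is that the argument relies implicitly on assumption (A6), which was needed in Lemma~\ref{lemst} to guarantee $\beta = \partial x_i/\partial y_i > 0$, and on (A2)--(A4) to rule out degenerate corner cases in both the necessity and the existence steps. Apart from acknowledging these, the proof is a one-paragraph synthesis of Lemmas~\ref{lempf}--\ref{lemir}.
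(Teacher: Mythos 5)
Your proposal is correct and follows essentially the same route as the paper's own proof: necessity via Lemmas~\ref{lempf}--\ref{lemst} (NE allocations satisfy KKT, hence equal $x^\star$ by strict concavity and sufficiency of KKT), existence via Lemma~\ref{lemexis}, and individual rationality via Lemma~\ref{lemir}. Your added remarks on the roles of (A2)--(A4) and (A6) are consistent with, if slightly more explicit than, the paper's one-paragraph synthesis.
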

\begin{proof} 
Thus from Lemma~\ref{lemexis} there exists at least one pure strategy NE. From Lemma~\ref{lempf}-\ref{lemst}, allocation $ x $ at any NE has to satisfy KKT conditions and knowing that KKT conditions are necessary and sufficient for optimality, it is clear that $ x = x^\star $. Finally, individual rationality was shown in Lemma~\ref{lemir} and with this we have our full implementation result.
\end{proof}

\subsection{Strong Budget Balance (at equilibrium)} \label{secsbboneq}

In this section we present a modification in the above mechanism so that in addition to above properties we also have SBB, i.e.,  $ \sum_{i \in \mN} t_i = 0 $ at those equilibria where $ y \in \mC $.

The message space and allocation $ x $ is exactly same as before, the only modification will be to the taxes. Whenever the quoted demand is feasible, we have $ x = y $. So from~\eqref{EQtaxl},~\eqref{EQtaxg} and previous lemmas we can write total tax at NE as
\begin{gather} \label{EQtaxredb}
\sum_{i \in \mN} t_i = \sum_{i \in \mN} \sum_{l \in \mL_i} t_i^l = \sum_{l \in \mL} \sum_{i \in \mN^l} t_i^l
= \sum_{l \in \mL} \sum_{i \in \mN^l}    A_{li} y_i p^l 
\end{gather}

To achieve SBB, our approach is to redistribute the above total tax amongst all the agents so that the amount of money received by an agent will not be controlled by that agent i.e. it will be only a function of $ y,p $ quoted by other agents. This will ensure that strategic decisions are still same as before and thus we would continue to have all the previous properties (Lemmas~\ref{lempf}-\ref{lemexis}). Furthermore, we will redistribute taxes per constraint. So for any constraint $ l \in \mL $, $ \sum_{i \in \mN^l} t_i^l $ will be redistributed only amongst agents in $ \mN^l $. 
After appropriately redistributing the taxes, we will check individual rationality at equilibrium once again since the total tax paid by an agent might have a different value now. The reason we present SBB as a side result is that one can always use such a redistribution technique and thus SBB generally reduces to appropriate algebraic manipulation and can usually be added later on in the mechanism (for problems of these types).

Let $ T_i^l $ be the new tax where analogously the old tax (from~\eqref{EQtaxg}) was $ t_i^l $. Then the total tax for any agent $ i $ is $ T_i = \sum_{l \in \mL_i} T_i^l $. From above discussion, what we want is
\begin{equation} \label{EQtaxnew}
T_i^l = t_i^l - f_i^l\left((y_j,p_j^l)_{\substack{j \in \mN^l \\ j \ne i}}\right)
\end{equation}
for any $ i \in \mN $ and $ l \in \mL_i $. The function $ f_i^l $ will be defined below. 

%
%
%

For $ l \in \mL $ such that $ N^l \ge 3 $, in case where $ A_{li} \ge 0 $ $ \forall $ $ i \in \mN^l $,
we have
\begin{equation} \label{EQf1sbb}
f_i^l \equiv f_{i,1}^l = \frac{1}{N^l - 2}\sum_{\substack{j \in \mN^l \\ j \ne i}} A_{lj} y_j \left(\bar{p}_{-i}^l - \frac{p_j^l}{N^l - 1} \right)
\end{equation}
when there are negative coefficients in $ A_{li} $, we can define $ f_i^l $ in the same manner as above, except that coefficients for those $ j $ that are involved in an equality constraint will be (scaled)\footnote{the scaling factor being the inverse of the number of agents on $ l $ that are involved in constraints that make their allocation equal.} $ \widetilde{A}_{lj} $ instead of $ A_{lj} $.

For $ l \in \mL $ such that $ N^l = 2 $ (assuming $ \mN^l = \{i,j\} $), we divide the redistribution into 2 cases: first we consider the case $ A_{li}, A_{lj} \ge 0 $, for this we have
\begin{equation} \label{EQf1sbb2}
f_i^l \equiv f_{i,1}^l = A_{lj} y_j p_j^l.
\end{equation}
If the coefficients $ A_{li} $ are not positive then we are essentially into the degenerate case, where $ \vert A_{li} \vert = \vert A_{lj} \vert = 1 $ with opposite signs. For this we have $ f_i^l \equiv f_{i,1}^l = 0 $ and same for agent $ j $.

With the modified tax we will denote the new game as $ \mathfrak{G}^\prime $.

\begin{lemma}[\textbf{Strong Budget Balance at NE}] \label{lemsbbeq}
For the game $ \mathfrak{G}^\prime $, at all NE $ s = (y,P) $ where $ y \in \mC $ we have
\begin{equation}
\sum_{i \in \mN} T_i = \sum_{i \in \mN} \sum_{l \in \mL_i} T_i^l = 0 \tag{SBB}
\end{equation}
\end{lemma}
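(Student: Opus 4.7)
The plan is to establish SBB by a per-constraint argument, showing that for each $l \in \mL$ the sum $\sum_{i \in \mN^l} f_i^l$ exactly equals $\sum_{i \in \mN^l} A_{li} y_i p^l$ (the total tax raised on constraint $l$) at any NE with $y \in \mC$. Since $T_i^l = t_i^l - f_i^l$ and the redistributions $f_i^l$ are constructed per constraint, summing over $i$ and then over $l$ would give $\sum_i T_i = 0$ provided each per-constraint identity holds.

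First I would reduce the total tax at NE. Since $y \in \mC$ implies $x = y$, the complementary slackness term $(c_l - A_l^\top x)^2$ need not vanish through complementary slackness alone, but by Lemma~\ref{lemep} all quoted prices on any constraint equal a common $p^l$, which kills the quadratic penalty $(p_i^l - \bar{p}_{-i}^l)^2$; also $(c_l - A_l^\top x)^2\,\bar p_{-i}^l p_i^l$ vanishes by Lemma~\ref{lemcs} (since if $p^l > 0$ then $c_l = A_l^\top x$, and if $p^l = 0$ the prefactor is zero). Hence $t_i^l = A_{li} y_i p^l$ and equation~\eqref{EQtaxredb} remains valid.

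Next I would verify the per-constraint identity $\sum_{i \in \mN^l} f_i^l = p^l \sum_{i \in \mN^l} A_{li} y_i$ case by case. In the main case $N^l \ge 3$ with $A_{li} \ge 0$, substituting $\bar{p}_{-i}^l = p^l$ and $p_j^l = p^l$ into~\eqref{EQf1sbb} gives
\begin{equation*}
f_i^l = \frac{p^l}{N^l-1}\sum_{j \in \mN^l,\, j \ne i} A_{lj} y_j,
\end{equation*}
and summing over $i \in \mN^l$ each term $A_{lj} y_j$ appears exactly $N^l - 1$ times, yielding $p^l \sum_{j \in \mN^l} A_{lj} y_j$ as desired. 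In the case $N^l = 2$ with $A_{li}, A_{lj} \ge 0$, the definition in~\eqref{EQf1sbb2} gives $f_i^l + f_j^l = A_{lj} y_j p_j^l + A_{li} y_i p_i^l = p^l(A_{li}y_i + A_{lj}y_j)$, which again matches. In the degenerate case $|A_{li}| = |A_{lj}| = 1$ with opposite signs (coming from an equality constraint), $y \in \mC$ forces $y_i = y_j$, so $\sum_{k \in \mN^l} A_{lk} y_k p^l = 0$ and the prescription $f_i^l = f_j^l = 0$ is consistent. For the mixed-sign case handled after~\eqref{EQf1sbb}, the same calculation as the first case applies with $A_{lj}$ replaced by $\widetilde{A}_{lj}$ for those $j$ in an equality class, and the averaging of $y_j$'s within such a class (which are all equal on $\mC$) delivers the same identity.

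Combining the three cases, $\sum_{i \in \mN^l}(t_i^l - f_i^l) = 0$ for every $l$. Summing over $l \in \mL$ yields $\sum_{i \in \mN} T_i = 0$. The main obstacle is purely bookkeeping: making sure the normalizations $1/(N^l - 2)$ and $1/(N^l - 1)$ in~\eqref{EQf1sbb} combine correctly across the $N^l$ agents, and that the degenerate equality-constraint cases do not leak a nonzero residual; both are handled by the observations that equal prices at NE and $y \in \mC$ (forcing equal demands within an equality class) collapse the expressions cleanly.
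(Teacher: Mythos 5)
Your proposal is correct and follows essentially the same route as the paper: reduce the NE tax to $t_i^l = A_{li}y_i p^l$ via the equal-prices and complementary-slackness lemmas, then verify $\sum_{i\in\mN^l} f_i^l = p^l\sum_{i\in\mN^l}A_{li}y_i$ constraint by constraint, with the same case split on $N^l=2$ versus $N^l\ge 3$ and on degenerate versus non-degenerate constraints. The only (harmless) difference is that you substitute the common price $p^l$ into~\eqref{EQf1sbb} before summing over $i$, which shortens the algebra, whereas the paper carries general prices through the double sum and obtains $\sum_{j}A_{jl}y_j\bar p_{-j}^l$ first.
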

\begin{proof}
Recall that due to Lemmas~\ref{lemep} and \ref{lemcs}, at NE we have
\begin{equation}
\sum_{i \in \mN} t_i = \sum_{i \in \mN} \sum_{l \in \mL_i} t_i^l = \sum_{l \in \mL} \sum_{i \in \mN^l} t_i^l = \sum_{l \in \mL} \sum_{i \in \mN^l} A_{li} y_i p^l
\end{equation}
For completing the proof, it will be sufficient to show $ \sum_{i \in \mN^l} t_i^l = \sum_{i \in \mN^l} f_i^l $ for any $ l \in \mL $, at NE. Let's begin with links with $ N^l = 2 $ (where $ \mN^l = \{i,j\} $). In case $ A_{li}, A_{lj} \ge 0 $ the total payment is
\begin{gather}
f_i^l + f_j^l = A_{lj} y_j p_j^l + A_{li} y_i p_i^l = \sum_{k \in \mN^l} A_{lk} y_k p^l.
\end{gather}
In the degenerate case, the tax paid by agents $ i,j $ at NE is $ y_i p^l, y_j p^l $ with opposite signs. But note that $ y_i = y_j $ at NE, since we are in the degenerate case. Hence the total tax is zero, which is the same as $ f_i^l + f_j^l $. Now for other links,
\begin{subequations}
\begin{gather}
\sum_{i \in \mN^l} \frac{1}{N^l - 2} \sum_{\substack{j \in \mN^l \\ j \ne i}} A_{jl} y_j \left(\bar{p}_{-i}^l - \frac{p_j^l}{N^l - 1} \right) = \frac{1}{N^l - 2} \sum_{i \in \mN^l} \sum_{\substack{j \in \mN^l \\ j \ne i}} A_{jl} y_j \left(\bar{p}_{-j}^l - \frac{p_i^l}{N^l - 1}\right)
\\
= \frac{N^l - 1}{N^l - 2} \sum_{j \in \mN^l} A_{jl} y_j \bar{p}_{-j}^l - \frac{1}{N^l - 2} \sum_{j \in \mN^l} \sum_{\substack{i \in \mN^l \\ i \ne j}} A_{jl} y_j \frac{p_i^l}{N^l - 1}
\\
= \frac{N^l - 1}{N^l - 2} \sum_{j \in \mN^l} A_{jl} y_j \bar{p}_{-j}^l - \frac{1}{N^l - 2} \sum_{j \in \mN^l} A_{jl} y_j \bar{p}_{-j}^l = \sum_{j \in \mN^l} A_{jl} y_j \bar{p}_{-j}^l = \sum_{j \in \mN^l} A_{jl} y_j p^l
\end{gather}
\end{subequations}
(the proof when $ A_{lj} $ are possibly negative also follows from above after explicitly using that equality constraints would make some allocations equal to each other at NE)
\end{proof}

After this, we have to check individual rationality. In cases when there is no degeneracy in $ \mC $, due to assumption (A6), $ A_{li} \ge 0 $ $ \forall~l,i $. This means that with $ f_{i}^l $ we have redistributed taxes by paying agents back, thus the redistribution is indeed non-negative (can be seen explicitly from~\eqref{EQf1sbb} and~\eqref{EQf1sbb2}). Thus if the NE was individually rational without this redistribution, it will continue to be so now.

For the cases involving degeneracy, for links associated with equality constraints the redistribution is zero. So individual rationality isn't affected. For other links, individual rationality follows from the argument in the previous paragraph after noting that effectively $ \widetilde{A}_{li} $ were used in place of $ A_{li} $ for defining $ f_{i}^l $ (and by (A6), $ \widetilde{A}_{li} \ge 0 $).

\section{Off-equilibrium Results} \label{secoffeq}

In this section we discuss (and prove) additional off-equilibrium properties of the mechanism. The original requirements of full implementation are only restricted to equilibrium properties. However we believe that in a realistic scenario, off-equilibrium properties are essential to justify working within a Nash-implementation framework. Nash equilibrium is an appropriate solution concept for complete information games i.e. games where it is assumed that all agents have complete knowledge about each others utility functions and also each agent knows that others know about his utility functions and so on (infinite hierarchy of beliefs). We are proposing this mechanism to be used in an informationally, and possibly physically, decentralized network setting like the Internet. The root of delving into mechanism design is that the designer doesn't have information about the utility functions, so to assume that agents themselves have all the information is impractical, especially for communications settings. In absence of complete information with agents, Nash implementation has still been used in literature - the justification being online or offline learning. In this formulation, before playing the actual game, agents participate in a multi-round learning process. After the agents have learnt about each others utilities (or more specifically the equilibrium action) they play the actual one-shot game.

In order to make the above learning model practical, there have to be real incentives involved for learning correctly. It is expected that while the learning process is still going on, in every round agents will quote demands and prices and receive allocation and taxes through the contract. We are interested in dealing with constrained resource allocation problems where the constraints could possibly be of a hard nature i.e. impossible to violate. An instance of this is the capacity constraint in unicast and multi-rate multicast examples. Hence if one assumes the learning model, then a necessary property for the mechanism would be that allocation is always (even off-equilibrium) feasible, since during the learning process agents will be playing off-equilibrium. To a lesser extent, if one is interested in SBB then same would have to be true of SBB as well.
It is for these reasons that we introduced the proportional allocation, as a distinct feature of our mechanism,
as opposed to the simple allocation used in \cite{healy2012designing}, \cite{demoscorrection},
where allocation equals demand, i.e. $ x = y $ everywhere.

Since the mid-90s, various learning models in Mechanism design and Game theory have been extensively studied. Readers may refer to
\cite{healy2012designing} for studying how mechanism design and learning are handled together. Also one may refer to
\cite{young2004strategic}, \cite{fudenberglearning},
\cite{tembine2012distributed},
\cite{shoham2009multiagent},
\cite{networkgames},
\cite{lasaulce2011game} for a compendium of existing results regarding learning in a strategic set-up.



Now we extend the modification from Section~\ref{secsbboneq} to achieve SBB off-equilibrium.

\subsection{SBB off-equilibrium} \label{secsbboffeq}

Here we will define $ f_i^l $ from~\eqref{EQtaxnew} differently so that we achieve the property of $ \sum_{i \in \mN} t_i = 0 $ at all points where $ y \in \mC $, not just at NE.
From the expressions in~\eqref{EQtaxredb} as well as the description of $ f_i^l $ below, it will be clear that this method of redistribution only works when there are sufficient number of agents on every link. So here we will 
modify assumption (A4) so that
\begin{itemize}
\item[(A4$ ^{\prime} $)] \noindent\fbox{\parbox{0.9\textwidth}{For any $ l \in \mL $, $ N^l \ge 5 $ i.e. there are at least $ 5 $ agents on any constraint.}}
\end{itemize}
Also, we will only deal with the non-degenerate case where $ A_{li} \ge 0 $ $ \forall $ $ l,i $. The corresponding results regarding degenerate cases would involve tedious case-by-case analysis and unnecessarily complicate the analysis of what should be a straightforward addition to the mechanism.


Now $ f_i^l $ is defined in three parts: $ f_i^l\left((y_j,p_j^l)_{\substack{j \in \mN^l \\ j \ne i}}\right) = f_{i,1}^l + f_{i,2}^l + f_{i,3}^l  $. The three terms here are individually redistributing the three tax terms from~\eqref{EQtaxredb}. 
\begin{subequations}
\begin{gather}
\label{EQfi1}
f_i^l \equiv f_{i,1}^l = \frac{1}{N^l - 2}\sum_{\substack{j \in \mN^l \\ j \ne i}} A_{lj} y_j \left(\bar{p}_{-i}^l - \frac{p_j^l}{N^l - 1} \right) \\
\label{EQfi2}
f_{i,2}^l = \frac{N^l}{(N^l - 1)^2(N^l - 2)} \sum_{\substack{k > m \\ k,m \ne i}} \left(p_k^l - p_m^l\right)^2 \\
f_{i,3}^l = \eta \left( f_{i,3a}^l + f_{i,3b}^l + f_{i,3c}^l \right)
\\
f_{i,3a}^l = \frac{2c_l^2}{\left(N^l - 1\right)\left(N^l - 2\right)} \sum_{\substack{k > m \\ k,m \ne i}}  p_k^l p_m^l
\\
f_{i,3b}^l = \frac{2}{N^l - 1} \sum_{\substack{j > q \\ j,q \ne i}} \Big[ \frac{1}{N^l - 3}  \sum_{\substack{k \in \mN^l \\ k \ne i,j,q}} p_j^l p_q^l \phi(y_k) +  \frac{1}{N^l - 2} \sum_{\substack{k \in \mN^l \\ k = j,q}} p_j^l p_q^l \phi(y_k) \Big]
\end{gather}
\end{subequations}
where $ \phi(y_k) = A_{kl}^2 y_k^2 - 2 c_l A_{kl} y_k $. For $ f_{i,3c}^l $ define the set $ B \triangleq \{j,q\} \cap \{k,s\} $,
\begin{gather}
f_{i,3c}^l = \frac{4}{N^l - 1} \sum_{\substack{j > q \\ j,q \ne i}} \Bigg[ \frac{1}{N^l-4}  \sum_{\substack{k > s \\ k,s \ne i \\ \vert B \vert = 0}} \psi + \frac{1}{N^l-3}  \sum_{\substack{k > s \\ k,s \ne i \\ \vert B \vert = 1}}  \psi + \frac{1}{N^l-2}  \sum_{\substack{k > s \\ k,s \ne i \\ k = j, s = q}} \psi \Bigg]
\end{gather}
where $ \psi = \psi(p_j^l,p_q^l,y_k,y_s) =   p_j^l p_q^l \left( A_{kl} y_k \right) \left( A_{sl} y_s \right) $. It is this last expression that necessitates the assumption (A4$ ^{\prime} $). We will denote the game with the above modified taxes with $ \mathfrak{G}^{\prime\prime} $.

\begin{lemma}[Strong Budget Balance on and off-NE] \label{lemsbb}
For the game $ \mathfrak{G}^{\prime\prime} $, for all points in the message space $ \mathcal{S} $ where $ y \in \mC $
\begin{equation}
\sum_{i \in \mN} T_i = \sum_{i \in \mN} \sum_{l \in \mL_i} T_i^l = 0 \tag{SBB}
\end{equation}
\end{lemma}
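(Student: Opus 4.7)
The overall plan is to verify, per constraint $ l \in \mL $, the identity $ \sum_{i \in \mN^l} f_i^l = \sum_{i \in \mN^l} t_i^l $; summing over $ l $ then gives $ \sum_i T_i = \sum_l \sum_{i \in \mN^l}(t_i^l - f_i^l) = 0 $. Since $ y \in \mC $ implies $ x = y $ under proportional allocation (by Lemma~\ref{lempf} and the definition of the allocation map), each $ t_i^l $ evaluated at $ x = y $ (from~\eqref{EQtaxg}) splits into three summands, and the redistribution $ f_i^l = f_{i,1}^l + f_{i,2}^l + f_{i,3}^l $ is designed to cancel these three pieces one at a time. The structural identities I would repeatedly invoke are $ \sum_i \bar{p}_{-i}^l = \sum_j p_j^l $, the leave-one-out formula $ p_i^l - \bar{p}_{-i}^l = \tfrac{N^l}{N^l-1}(p_i^l - \bar{p}^l) $ with $ \bar{p}^l = \tfrac{1}{N^l}\sum_j p_j^l $, the variance identity $ \sum_i (p_i^l - \bar{p}^l)^2 = \tfrac{1}{N^l}\sum_{k<m}(p_k^l - p_m^l)^2 $, and $ \sum_i p_i^l \bar{p}_{-i}^l = \tfrac{2}{N^l-1}\sum_{k>m}p_k^l p_m^l $.

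For the first piece, a short double-sum manipulation using $ \bigl(\sum_i z_i\bigr)\bigl(\sum_j p_j^l\bigr) = \sum_i z_i p_i^l + (N^l-1)\sum_i z_i \bar{p}_{-i}^l $ with $ z_i = A_{li}y_i $ produces $ \sum_i f_{i,1}^l = \sum_i A_{li}y_i \bar{p}_{-i}^l $ purely algebraically, with no appeal to any equilibrium condition --- this is essential here, since we are off-NE in general and prices need not be equal across agents. For the second piece, combining the leave-one-out and variance identities reduces $ \sum_i (p_i^l - \bar{p}_{-i}^l)^2 $ to $ \tfrac{N^l}{(N^l-1)^2}\sum_{k>m}(p_k^l - p_m^l)^2 $; on the $ f_{i,2}^l $ side each pair $ (k,m) $ is hit by exactly $ N^l - 2 $ redistributors $ i \ne k,m $, so the prefactor $ \tfrac{N^l}{(N^l-1)^2(N^l-2)} $ is precisely what the counting demands. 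For the third piece I would expand $ (c_l - A_l^\top y)^2 = c_l^2 + \sum_k \phi(y_k) + 2\sum_{k>s} A_{lk}y_k A_{ls}y_s $ and match the three parts with $ f_{i,3a}^l $, $ f_{i,3b}^l $, $ f_{i,3c}^l $. The $ c_l^2 $ part follows immediately from the identity for $ \sum_i p_i^l \bar{p}_{-i}^l $; the $ f_{i,3b}^l $ part handles $ \phi(y_k) $ by partitioning redistributors into $ i \notin \{j,q,k\} $ versus $ k \in \{j,q\} $ so that, after collapsing over $ i $, each triple $ (j>q, k) $ contributes $ p_j^l p_q^l \phi(y_k) $ exactly once.

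The genuine combinatorial hurdle is $ f_{i,3c}^l $: for each pair $ (j>q) $ of price indices and each pair $ (k>s) $ of demand indices, the quantity $ \psi = p_j^l p_q^l (A_{lk}y_k)(A_{ls}y_s) $ must be distributed over admissible redistributors $ i \notin \{j,q\} $ that are also excluded from $ \{k,s\} $; the number of such $ i $'s depends on the overlap $ |B| = |\{j,q\}\cap\{k,s\}| \in \{0,1,2\} $, yielding respectively $ N^l - 4 $, $ N^l - 3 $, $ N^l - 2 $ eligible indices. The three inner sums with denominators $ N^l-4,\ N^l-3,\ N^l-2 $ are exactly the normalizations that make each 4-tuple $ (j,q,k,s) $ contribute $ \psi $ once, after which the double sum factorizes into $ \bigl(\sum_{j>q}p_j^l p_q^l\bigr)\bigl(\sum_{k>s} A_{lk}y_k A_{ls}y_s\bigr) $ and matches the remaining tax piece up to the prefactor $ \tfrac{4\eta}{N^l - 1} $. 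This step is precisely where assumption (A4$ ^\prime $) with $ N^l \ge 5 $ is forced: otherwise $ N^l - 4 $ in the $ |B|=0 $ case would be non-positive and the redistribution would be ill-defined. Combining the three matches, summing over $ l \in \mL $ and noting that each $ f_i^l $ depends only on $ (y_j, p_j^l)_{j \ne i} $ (and so does not affect strategic incentives) completes the proof.
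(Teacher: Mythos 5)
Your proposal is correct and follows essentially the same route as the paper: reduce to the per-constraint identity $\sum_{i \in \mN^l} t_i^l = \sum_{i \in \mN^l} f_i^l$, match the three redistribution pieces term-by-term against the three tax terms (with the expansion of $(c_l - A_l^\top y)^2$ into $c_l^2$, $\sum_k \phi(y_k)$, and the cross terms), and handle $f_{i,3c}^l$ by counting eligible redistributors according to the overlap $|B|$, which is exactly where (A4$^\prime$) enters. Your explicit remark that the $f_{i,1}^l$ cancellation is purely algebraic and does not use equal prices is a worthwhile clarification of a step the paper delegates to the (at-NE) proof of Lemma~\ref{lemsbbeq}, but it is the same argument.
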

\begin{proof}
By rearranging the above sum: $ \sum_{i \in \mN} \sum_{l \in \mL_i} T_i^l = \sum_{l \in \mL} \sum_{i \in \mN^l} T_i^l $, we can see that it will be sufficient if we show that for any constraint $ l \in \mL $,
\begin{equation} \label{EQtaxred2}
\sum_{i \in \mN^l} T_i^l = 0
\quad \Leftrightarrow \quad
\sum_{i \in \mN^l} t_i^l = \sum_{i \in \mN^l} f_i^l\left((y_j,p_j^l)_{\substack{j \in \mN^l \\ j \ne i}}\right)
\end{equation}
We will again show this term-by-term; we have already shown in the proof of Lemma~\ref{lemsbbeq} that the sum of $ f_{i,1}^l $ is equal to the sum of payment (first) term  from $ t_i^l $, similarly the sum of $ f_{i,2}^l $ and $ f_{i,3}^l $ will be shown to be equal to the sum of the second and third terms from $ t_i^l $, respectively. Starting from the second term
\begin{subequations}
\begin{gather}
\sum_{i \in \mN^l} \left(p_i^l - \bar{p}_{-i}^l\right)^2 = \sum_{i \in \mN^l} (p_i^l)^2 + \left(\bar{p}_{-i}^l\right)^2 - 2 p_i^l \bar{p}_{-i}^l \\
= \frac{N^l}{N^l - 1} \sum_{i \in \mN^l} (p_i^l)^2 - \frac{2N^l}{(N^l - 1)^2} \sum_{\substack{k,m \in \mN^l \\ k > m}} p_k^l p_m^l \\
= \frac{N^l}{(N^l - 1)^2} \Big[(N^l - 1) \sum_{i \in \mN^l} (p_i^l)^2 - 2 \sum_{\substack{k,m \in \mN^l \\ k > m}} p_k^l p_m^l \Big] = \frac{N^l}{(N^l - 1)^2} \sum_{\substack{k,m \in \mN^l \\ k > m}} \left(p_k^l - p_m^l\right)^2
\end{gather}
\end{subequations}
Now observing
\begin{equation}
\sum_{\substack{k,m \in \mN^l \\ k > m}} \left(p_k^l - p_m^l\right)^2 = \frac{1}{N^l-2} \sum_{i \in \mN^l} \sum_{\substack{k,m \in \mN^l \\ k > m \\ k,m \ne i}} \left(p_k^l - p_m^l\right)^2
\end{equation}
we get the equality of second terms. For the third term
\begin{subequations}
\begin{gather}
\sum_{i \in \mN^l} \etas p_j^l \bar{p}_{-j}^l \Big(c_l - \sum_{k \in \mN^l} \alpha_k^l y_k\Big)^2 = \etas c_l^2 \sum_{j \in \mN^l} p_j \bar{p}_{-j}^l + \sum_{j \in \mN^l} \etas p_j \bar{p}_{-j}^l \Big(\sum_{k \in \mN^l} \phi(y_k) \Big) \\
+ {} 2 \eta \sum_{j \in \mN^l} p_j \bar{p}_{-j}^l \Big( \sum_{k > m} A_{kl} A_{ml} y_k y_m \Big) \equiv \eta \left( T_1 + T_2 + T_3 \right)
\end{gather}
\end{subequations}
where as before $ \phi(y_k) = A_{kl}^2 y_k^2 - 2 c_l A_{kl} y_k $. Here we will equate $ T_1,T_2,T_3 $ with the sum of $ f_{i,3a}^l, f_{i,3b}^l, f_{i,3c}^l $, respectively.
\begin{gather}
c_l^2 \sum_{j \in \mN^l} p_j \bar{p}_{-j}^l = \frac{2c_l^2}{N^l-1} \sum_{k>m} p_k^l p_m^l = \frac{2c_l^2}{(N^l-1)(N^l-2)} \sum_{i \in \mN^l} \sum_{\substack{k>m \\ k,m \ne i}} p_k^l p_m^l
\end{gather}
This completes $ f_{i,3a} $, now for $ f_{i,3b} $
\begin{subequations}
\begin{gather}
\sum_{j \in \mN^l} p_j \bar{p}_{-j}^l \Big(\sum_{k \in \mN^l} \phi(y_k) \Big) = \frac{2}{N^l-1} \sum_{j>m} \sum_{k \in \mN^l} p_j^l p_m^l \phi(y_k)
\\
= \frac{2}{N^l - 1} \sum_{j>m} \Big(\sum_{k \ne j,m}  p_j^l p_m^l \phi(y_k) + \sum_{k=j,m}  p_j^l p_m^l \phi(y_k)  \Big)
\\
= \frac{2}{N^l - 1} \sum_{i \in \mN^l} \sum_{\substack{j>m \\ j,m \ne i}} \Bigg( \frac{1}{N^l - 3}  \sum_{\substack{k \ne i,j,m}}  p_j^l p_m^l \phi(y_k) + \frac{1}{N^l - 2}  \sum_{k=j,m}  p_j^l p_m^l \phi(y_k) \Bigg)
\end{gather}
\end{subequations}
This completes $ f_{i,3b} $ and finally the sum for $ f_{i,3c} $ can be shown in the same way as above.

Hence after comparing the sum of all three terms from $ t_i^l $ we have indeed proved~\eqref{EQtaxred2}.
\end{proof}

Now just as before, we check individual rationality (for the modified game $ \mathfrak{G}^{\prime\prime} $). Overall utility for agent $ i $ after redistribution, at NE, is
\begin{equation} \label{EQsbbu}
\widehat{u}_i = v_i(x_i) - x_i \sum_{l \in \mL_i} A_{li} p^l + \sum_{l \in \mL_i} f_{i,1}^l + f_{i,2}^l + f_{i,3}^l
\end{equation}
where the last summation is calculated at NE.
At NE, we have equal prices, hence from~\eqref{EQfi2} it is clear that $ \sum_{l \in \mL_i} f_{i,2}^l $ term is zero. Since we are only dealing with the non-degenerate case in this section, it is clear from the definition of $ f_{i,1}^l $ (see~\eqref{EQfi1}) that $ \sum_{l \in \mL_i} f_{i,1}^l \ge 0 $. We know already, from Lemma~\ref{lemir}, that $ v_i(x_i) - x_i \sum_{l \in \mL_i} A_{li}p^l - v_i(0) > 0 $, furthermore since the functions $ v_i $ are strictly concave and $ x_i > d_i > 0 $ (by assumption (A2)) we can bound the difference in above inequality by a value bigger than zero.

Now note that $ f_{i,3}^l $ is multiplied by a positive factor $ \eta $ which is still to be chosen. Since prices $ p^l $, demand/allocation $ y_i $ and coefficients $ A_{li} $ are all absolutely bounded, we can always make the contribution of $ \sum_{l \in \mL_i} f_{i,3}^l $ small enough (without having the knowledge of equilibrium values in advance) so that the overall utility in~\eqref{EQsbbu} is bigger than $ v_i(0) $. Hence we have individual rationality here as well.

\section{Generalizations} \label{secgen}

Three quite interesting generalizations arise immediately from the set-up and analysis in this paper. The first is a case where agents have utilities based on a vector allocation rather than a scalar allocation i.e., the multiple goods scenario. The objective in this case will be written as
\begin{equation}
\sum_{i \in \mN} v_i(\underline{x}_i) \quad \text{with} \quad \underline{x}_i \in \mathbb{R}_+^{D_i}. 
\end{equation}
Note that the assumption of strict concavity can still be made. In a communications scenario, such an example can arise if the Internet agents have utility based on throughput as well as delay or packet error rate.
In the case of the local public goods problem and in particular its instance that models the interaction between wireless agents,
vector allocation for agent $i$ models the power level of all users that can interfere with $i$.
In this case the problem~\eqref{CPlpg} can be restated as
\begin{subequations}
\begin{gather}
\max_{x}~ \sum_{i \in \mN} v_i(\underline{x}_i) \tag{C$ _{\text{vlpb}} $} \\
\text{s.t.} \quad \underline{x}_i \in \mathbb{R}_+^{D_i} ~~\forall~i \in \mN \\
\text{s.t.} \quad {E_i^k}^{\top} \underline{x}_i = {E_j^k}^{\top} \underline{x}_j ~~\forall~i,j \in \mN_k,~\forall~k \in \mK
\label{eq:vec_eq}
\end{gather}
\end{subequations}
where $ \mN_k $ as before denote localities, however they needn't be disjoint now. Note that in the vector equations~\eqref{eq:vec_eq}, multiplication by matrices
$ E_i^k $ accomplishes the task of selecting some coordinates from $\underline{x}_i$.

The second generalization is with problems which can be equivalently formulated in the form of~\eqref{CPg}, but perhaps with the help of auxiliary variables. Consider the multi-rate multicast system, in which agents are divided into multicast groups, where agents within a multicast group communicate with exactly the same data but possibly at different quality (information rate) for each agent. This problem has both - private and public goods characteristics. Just like unicast, the limited capacity on links creates constraints on allocation here. However for saving bandwidth, only the highest demanded rate from each multicast group is transmitted on every link. This means that resources are directly shared within every group.
If we index agents by a double-index $ ki \in \mN $ where $ k $ represents their multicast group and $ i $ is the sub-index within the multicast group $ k $, we can state the optimization problem as
\begin{gather}
\max_{x \in \mathbb{R}_+^N}~ \sum_{ki \in \mN} v_{ki}(x_{ki}) \tag{CP$ _\text{m} $} \\
\text{s.t.} \quad \sum_{k \in \mK^l} \max_{i \in \mG_k^l}\, \{\alpha_{ki}^l x_{ki}\} \le c^l \quad \forall~l \in \mL
\end{gather}
here $ \mK^l $ represents the set of multicast groups present on link $ l $ and $ \mG_k^l $ represents the set of sub-indices from multicast group $ k $, that are present on link $ l $. Mathematically, the important property to note above is that the feasibility region is indeed a polytope (since the constraints are piecewise linear). We can easily linearize the above by using auxiliary variables $ m_k^l $ as proxies for $ \displaystyle\max_{i \in \mG_k^l}\, \{\alpha_{ki}^l x_{ki}\} $ indexed by group and link as follows
\begin{subequations}
\begin{gather}
\max_{x,m}~ \sum_{ki \in \mN} v_{ki}(x_{ki}) \tag{CP$^{\prime}_\text{m} $} \\
\text{s.t.} \quad \alpha_{ki}^l x_{ki} \le m_k^l \quad \forall~i \in \mG_k^l,~k \in \mK^l,~l \in \mL \\
\text{s.t.} \quad \sum_{k \in \mK^l} m_k^l \le c^l \quad \forall~l \in \mL.
\end{gather}
\end{subequations}
In the presence of auxiliary variables, the construction of the mechanism (esp. taxes) are slightly different in nature but follow the same philosophy as in this paper. Readers may refer to \cite{demosmulticorrection}, \cite{SiAn_multicast_arxiv} for a full implementing mechanism specifically for the multi-rate multicast problem. Other works for the multicast problem include finding decentralized algorithms to achieve social utility maximizing allocation \cite{sarkaropt}, \cite{stoenescu2007multirate} as well as determining optimal allocation via max-min fairness \cite{sarkarfair}, \cite{SaTa02}.
The incorporation of this model into the unified design methodology is a research topic the authors are currently working on.

The third possible generalization would be where the constraint set, instead of being a polytope, is taken as general convex set (satisfying assumptions from Section~\ref{subsecassump}). If one represents the centralized problem in the form a general convex optimization problem
\begin{subequations}
\begin{gather}
\max_{x,m}~ \sum_{i \in \mN} v_i(x_i) \\
\text{s.t.} \quad g_l(x) \le 0 \quad \forall~l \in \mL,
\end{gather}
\end{subequations}
then, looking at the Lagrangian
\begin{equation}
L(x,\lambda) = \sum_{i \in \mN} v_i(x_i) - \sum_{l \in \mL} \lambda_l g_l(x),
\end{equation}
one can construct the first term (payment) in the tax (see~\eqref{EQtaxg}) in a similar way as before. After this, the rest of tax terms can be analogously constructed from the KKT conditions.

The aforementioned generalizations together can lead to fully implementing mechanisms with minimal message space that can solve an even larger class of problems of interest.

We conclude with a note on the robustness properties of these mechanisms that we would like to investigate.
Recently there has been a lot of work related to robustness in mechanism design\footnote{see \cite{fudenbergrobustness} for treatment of the same and \cite{bergemann2013introduction} for a survey of the major robustness results in Mechanism design.}. In addition to investigating the learning properties of our mechanism, we are also interested in investigating robustness of our mechanism (w.r.t. information i.e. beliefs of agents) and how the two might be related.

\bibliographystyle{unsrtnat}
\bibliography{abhinav}

\end{document}